\def\CALQAEXTENDED{1}
\documentclass[conference,twocolumn]{IEEEtran}

\usepackage{cite}
\usepackage{amsmath,amssymb}
\usepackage{graphicx}
\usepackage{booktabs}
\usepackage{tabularx}
\newcolumntype{L}{>{\raggedright\arraybackslash}X}
\newcolumntype{C}{>{\centering\arraybackslash}X}
\newcolumntype{R}{>{\raggedleft\arraybackslash}X}
\usepackage{url}
\usepackage[colorlinks=true,urlcolor=blue,linkcolor=black,citecolor=black,filecolor=black]{hyperref}
\usepackage{xspace}

\newtheorem{definition}{Definition}
\newtheorem{problem}{Problem}
\newtheorem{lemma}{Lemma}

\newtheorem{theorem}{Theorem}
\newtheorem{corollary}{Corollary}

\newcommand{\extURL}{\url{https://arxiv.org/abs/2610.13249}}
\newcommand{\codeURL}{\url{https://github.com/xiufengliu/calqa}}
\newif\ifextended
\ifdefined\CALQAEXTENDED\extendedtrue\else\extendedfalse\fi
\newcommand{\fullproof}[2]{\emph{Full proof:} \ifextended Section~#2.\else extended
version~\cite[Sec.~#1]{CALQAext}.\fi}

\newcommand{\E}{\mathbb{E}}
\renewcommand{\P}{\mathbb{P}}
\newcommand{\R}{\mathbb{R}}
\newcommand{\Z}{\mathbb{Z}}
\newcommand{\one}{\mathbf{1}}
\newcommand{\calqa}{CALQA\xspace}
\newcommand{\qa}{q^{a}}
\newcommand{\qb}{q^{b}}
\newcommand{\shat}{\hat{s}}
\newcommand{\qhat}{\hat{q}}
\newcommand{\dtv}{d_{\mathrm{TV}}}
\newcommand{\ip}[2]{\langle #1,#2\rangle}
\newcommand{\norm}[1]{\lVert #1\rVert_1}
\newcommand{\Sc}{\mathcal{A}}
\newcommand{\need}{\nu}

\begin{document}

\title{CALQA: Conformal Screening of Forecast-Ensemble Configurations from One Mergeable State\ifextended\\(Extended Version)\fi}

\author{\IEEEauthorblockN{Xiufeng Liu\IEEEauthorrefmark{1},
Ruyu Liu\IEEEauthorrefmark{1},
Yusen Liu\IEEEauthorrefmark{2},
Huan Huo\IEEEauthorrefmark{2},
Yanyan Yang\IEEEauthorrefmark{3},
Per Sieverts Nielsen\IEEEauthorrefmark{1}}
\IEEEauthorblockA{\IEEEauthorrefmark{1}Dept. of Technology, Management and Economics, Technical University of Denmark, 2800 Kongens Lyngby, Denmark\\
\IEEEauthorrefmark{2}School of Computer Science, University of Technology Sydney, Sydney, NSW 2007, Australia\\
\IEEEauthorrefmark{3}School of Computing, University of Portsmouth, Portsmouth PO1 3HE, UK\\
\IEEEauthorrefmark{1}\{xiuli, ruyli, pernn\}@dtu.dk,
\IEEEauthorrefmark{2}\{yusen.liu, huan.huo\}@uts.edu.au,
\IEEEauthorrefmark{3}Linda.Yang@port.ac.uk}}

\maketitle

\begin{abstract}
AutoML systems fit ensemble weights over model libraries; a search can score, per category,
market or time pane, thousands of weightings of the quantile-averaged (Vincentized) ensemble for the narrowest
conformally calibrated interval; exact calibration scores a group's rows per candidate. We present
\calqa, a fixed-size integer aggregate per group and pane that stores histograms of normalised block scores at a net
of weight vectors. One GROUP BY builds it; it merges by addition, retracts deletions by subtraction and scores
any candidate without reading a row. We prove that its score never falls below the exact split-conformal shift and
exceeds it by at most the net distance plus a grid gap; that coverage carries over to each fixed weight vector
and, for i.i.d.\ blocks and up to a union-bound slack, to a searched weight vector deployed with the state's
shift; and how small any such state can be. The state is thus a screen: it ranks all candidates, and exact
recomputation re-checks the top 100. On five public panels (604 groups, 10,000 candidates each) this two-stage
plan matches exact search on held-out blocks (99.95\% of its Winkler score), 1.3--59 times faster. On
two real panels of about $10^8$ rows (smart meters, retail sales) it is 44--54 times faster at 99.98--100.01\%
of that score and beats sample screens with the same re-check in regret at less total time. Picks deployed with their exact shift need a held-out
coverage check. The state does not pay off for small or greedy searches, off-net weights at ample memory or full
twelve-member support.
\end{abstract}

\begin{IEEEkeywords}
uncertain and approximate data, time series, mergeable summaries, conformal prediction, forecast
combination, what-if queries
\end{IEEEkeywords}

\section{Introduction}\label{sec:intro}

Retail, energy and cloud forecasting stacks store the probabilistic forecasts of several member models next to the
data they forecast~\cite{Fischer2012,Januschowski2024}, and they serve an ensemble. The usual combiner is quantile
averaging, or Vincentization (VI): the $p$-quantile of the ensemble is a weighted average of the members'
$p$-quantiles~\cite{Lichtendahl2013,Busetti2017}. AutoML systems already choose ensemble weights after training, over
libraries of fitted models, by greedy ensemble selection~\cite{Caruana2004,Feurer2015,Erickson2020}, also for
quantile forecasters~\cite{Shchur2023}. The weights are thus chosen after ingest: at every retraining such a job, or an
analyst working in the warehouse, can search for each group (a product category, a market, a quarter) over
thousands of candidate weight vectors $w$ for the one whose interval is narrowest while still covering, say, 90\% of
future blocks, where a block is all horizons of one forecast origin or the 24 hours of a delivery day. The target
needs calibration even when the averaged quantiles look calibrated row by row~\cite{Cooke2023,GrushkaCockayneJose2020}:
on our electricity-price panels the uncalibrated $(0.1,0.9)$ interval covers 0.813--0.823 of the rows but only
0.296--0.334 of the daily blocks.

Split conformal prediction gives the calibration for one fixed $w$: the interval is widened by an order statistic of
block scores on held-out calibration blocks~\cite{Vovk2012}. Computing it exactly reads every calibration row of the
group, so a search over $M$ candidates in $G$ groups costs $GM$ passes over the rows. The data-system options are in
Table~\ref{tab:options}. Stored per-configuration counts are a distributive aggregate that rolls up like a data-cube
measure~\cite{GuptaMumick1995,Gray1997Cube}, but only for configurations declared in advance. Mergeable conformal
summaries~\cite{LuFCP2023,Humbert2023,WenSimeoneXing2026} summarise the scores of one predictor. A block sample serves
any $w$, but it is valid only in coverage and cannot be refilled after a deletion without a rescan.

\begin{table}[!t]
\centering
\scriptsize
\caption{Options for calibrating post-hoc configurations per group and window.}
\label{tab:options}
\setlength{\tabcolsep}{2.2pt}
\begin{tabularx}{\columnwidth}{@{}llllL@{}}
\toprule
Option & Size & Post-hoc & Merge/ & Answer per $w$ \\
 & per group & $w$ & retract & \\
\midrule
Recompute & raw rows & yes & rows & exact \\
Stored counts (e) & $J{+}1$ per $w$ & declared & yes/yes & exact \\
One-score summary & one score & no & yes/-- & one predictor \\
Block sample & $s$ blocks & yes & yes/drop & valid in coverage \\
\calqa & $|N|(J{+}1){+}1$ & yes & yes/yes & dominates exact \\
\bottomrule
\end{tabularx}
\end{table}

\begin{figure}[!t]
\centering
\includegraphics[width=\columnwidth]{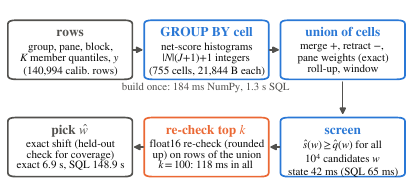}
\caption{Data flow of the two-stage configuration search, with measured times for the largest group of the configuration search in Section~\ref{sec:exp-search} (M4 Daily,
$K{=}4$, all 755 cells, $10^4$ Dirichlet candidates). NumPy search times come from the run of Table~\ref{tab:v2search};
the build times and those marked SQL (DuckDB) come from a separate run on the same data.
The state is built once; screening and re-check run per search.}
\label{fig:dataflow}
\end{figure}

\calqa is an aggregate that answers the calibration question for every candidate without reading a row. Once each
row is normalised by a scale that does not depend on $w$ and dominates the spread of the member quantiles, the
conformal score of a VI interval is $\frac12$-Lipschitz in $w$ in $\ell_1$, whatever the outcome and the level of the
series (Lemma~\ref{lem:lip}). \calqa therefore keeps, per group and time pane, histograms of block scores at a finite
net of weight vectors, and scores a candidate $w$ by the smallest net threshold plus half the $\ell_1$ distance to its
net point (Figure~\ref{fig:intuition}). The state is a vector of integers built by one GROUP BY. It merges by addition,
retracts deleted entities by subtraction and takes pane weights, so roll-ups, complements and sliding windows are
exact (Figure~\ref{fig:dataflow}).

The score is a safe screen. For every $w$ it lies between the exact split-conformal shift and that shift plus the net
distance and a grid gap (Theorem~\ref{thm:state}), so the state can rank all candidates, and exact recomputation from
the rows of the group needs to re-check only the best few. Over 604 groups of five public panels and 10,000
candidates per group, re-checking the top 100 picks configurations that do as well as exact search on held-out
blocks (99.95\% of the exact-search pick's block-Winkler score; 100.41\% for the state alone), at 1.3--59 times less time
than vectorised exact recomputation (Table~\ref{tab:v2search}). A block-sample screen with the same re-check is
slower and worse at the state's bytes, and larger samples win only at several times the time: on these panels the
plan is Pareto-optimal among the plans we ran, not dominant, and its gain is time, not memory. At scale the picture
sharpens. On 105,540,816 rows of the CER smart-meter panel~\cite{CER2012} and 88,786,880 rows of M5 retail
sales~\cite{Makridakis2022M5} (4,005 groups, 16-thread cluster nodes) the plan is 44--54 times faster than exact
search at 99.98--100.01\% of its held-out score, and it beats sample screens at one and four times the state's bytes
with the same re-check in regret at less total time (Table~\ref{tab:v2scale}).

The guarantees are proved, not measured. The interval returned for each fixed $w$, on or off the net, keeps the
weighted conformal coverage bound of Barber et al.~\cite{BarberCandes2023} with a gap that does not depend on $w$ or
on series levels; a weight vector picked by a search on the calibration blocks is covered, for i.i.d.\ blocks and
with the state's shift, up to a union-bound slack of order $\sqrt{\log|N|/n}$ over the net
(Corollary~\ref{cor:sel}). Picks deployed with their exact shift are not covered, and they cover only 0.84--0.90 of
held-out blocks per family. We also show that merge closure forces any deterministic dominating state to have order
$\epsilon^{-(1+d/2)}$ coordinates on a $d$-dimensional face of the weight simplex when the shift window is longer than
$\frac12$ (order $\epsilon^{-d/2}$ otherwise), and that net histograms need and achieve $\epsilon^{-(1+d)}$
(Table~\ref{tab:deploy} lists where the state does not pay off).

Our contributions are as follows.
\begin{itemize}
\item \emph{Operator.} Calibration screening over post-hoc ensemble weights, groups and windows as an in-database
aggregate: the net-score state, a group homomorphism under merge and retraction, built and queried in SQL
(Theorem~\ref{thm:state}, Figure~\ref{fig:sql}), and a two-stage search plan on top of it.
\item \emph{Guarantees and limits.} A deterministic sandwich against the exact shift for every $w$; transfers of
weighted conformal coverage to each post-hoc $w$ and, for i.i.d.\ blocks, to data-selected $w$
(Theorem~\ref{thm:cov}, Corollaries~\ref{cor:exch}--\ref{cor:sel}); a merge-closure lower bound for deterministic
dominating states and matching exponents for the net-histogram class (Lemma~\ref{lem:merge},
Theorems~\ref{thm:lb}--\ref{thm:tq}).
\item \emph{Evaluation.} An end-to-end configuration search, scored in and out of sample, against exact
recomputation, Lipschitz-pruned exact search and block-sample screens, in NumPy and DuckDB on five public panels
and in NumPy on two real panels of about $10^8$ rows each, with build, merge and query in both engines on
$10^8$ smart-meter rows; an equal-bytes regime map with random, realistic, declared and
neighbourhood weights; build, merge and query cost for many groups; forward-in-time decay; twelve-member families and
deletion (Section~\ref{sec:exp}).
\end{itemize}

\section{Related Work}

\emph{Combining and calibrating forecasts.} The linear pool is over-dispersed when its members are
calibrated~\cite{GneitingRanjan2013}; quantile averaging keeps the members' location-scale family and is
sharper~\cite{Lichtendahl2013,Busetti2017}, yet averaged quantiles can be overconfident~\cite{Cooke2023} and averaged
M4 interval endpoints were mostly too wide~\cite{GrushkaCockayneJose2020}. Fitted corrections~\cite{TaylorBunn1999}
restore calibration~\cite{Gneiting2007} but are refitted per configuration, and the Bates--Granger
formula~\cite{BatesGranger1969} predicts the error variance, not the coverage, of every weight vector.

\emph{Conformal prediction.} Split conformal prediction and its PAC version~\cite{Vovk2012} calibrate one fixed
predictor; weighted variants correct for covariate shift~\cite{Tibshirani2019}, and conformalized quantile regression
widens a quantile interval by one shift~\cite{Romano2019CQR}, as $I_o(w,s)$ does in units of $\sigma_o$. Barber et
al.\ bound the coverage loss without exchangeability by total-variation terms~\cite{BarberCandes2023};
Theorem~\ref{thm:cov} transfers their bound to each post-hoc weight vector. Two-layer data~\cite{LeeBarberWillett2023,
DunnWassermanRamdas2022} give our block-level exchangeability; time-series variants treat one score per
series~\cite{Stankeviciute2021,Chernozhukov2018,Diquigiovanni2021} or mixing data~\cite{Oliveira2024}, and adaptive
conformal inference keeps long-run coverage for any sequence~\cite{GibbsCandes2021,Zaffran2022}. EnbPI calibrates a
bootstrap ensemble~\cite{Xu2021EnbPI}, online aggregation combines the conformal sets of several
models~\cite{GasparinRamdas2024}, and online conformal prediction for panels treats non-exchangeable
series~\cite{TuGiesecke2026}. Selecting among finitely many conformal procedures costs a union-bound
slack~\cite{YangKuchibhotla2025}, or none when the selection is symmetrised~\cite{LiangZhuBarber2024}; aggregating
conformity scores over a weight grid~\cite{LuoZhou2024} or a function class~\cite{Bai2022} pays a uniform slack.
Corollary~\ref{cor:sel} applies the union bound to the net and carries it to all of $\Delta_K$.

\emph{Data systems and lower bounds.} What-if systems provision hypothetical updates~\cite{Deutch2013Caravan} or use
causal models~\cite{Galhotra2022HypeR}; our hypotheses are post-hoc weights of one aggregate. Approximate query
processing answers from samples~\cite{Agarwal2013BlinkDB,Chaudhuri2017AQP}, which our block-sample baselines are.
Mergeable summaries and quantile sketches summarise one value stream~\cite{Agarwal2012,Karnin2016,Masson2019,Gan2018},
and federated conformal prediction merges summaries of one model's scores~\cite{LuFCP2023,Humbert2023,Kang2024RobFCP},
also with T-Digests for prespecified groups~\cite{WenSimeoneXing2026}; none is indexed by a continuous weight vector.
Forward decay makes decayed aggregates mergeable~\cite{CormodeForwardDecay2009,CormodeTirthapuraXu2007,
CormodeTirthapuraXu2009}. Data cubes and materialised views keep distributive aggregates current under inserts and
deletes~\cite{Gray1997Cube,Harinarayan1996,GuptaMumick1995,Koch2014DBToaster}, non-distributive ones need auxiliary
state~\cite{Palpanas2002}, and stored per-configuration counts are such a view (baseline (e)). Quantile summaries
have lower bounds in the comparison model~\cite{CormodeVesely2020} and beyond~\cite{GuptaSinghalWu2024}, turnstile
algorithms may be taken to be linear~\cite{LiNguyenWoodruff2014}, and relative approximations bound range
counting~\cite{HarPeledSharir2011}. Our lower bound combines an $\epsilon$-capacity
count~\cite{KolmogorovTikhomirov1959} with the fuzzy-boundary counting of approximate range
searching~\cite{AryaMount2000}.

\section{Preliminaries and Problem}\label{sec:prelim}

\emph{Running example.} The German day-ahead panel (DE, Table~\ref{tab:v2datasets}) holds, for each delivery
day, 24 hourly rows with the $(0.1,0.9)$ quantile forecasts of $K=4$ members and the realised price. A block is one
delivery day and a group is one quarter, with 34--54 calibration days. On the forward test days of Section~\ref{sec:exp-time} the
uncalibrated VI interval covers 0.818 of the hours but only 0.296 of the days (Table~\ref{tab:v2temporalfwd}), so a
guarantee for whole days needs a calibrated shift. A configuration search then tries, for every quarter, $10^4$
weight vectors and keeps the one with the narrowest calibrated interval; computed exactly, each of the $10^4$ shifts
scores all 24 rows of every calibration day of the quarter. We return to this example in Sections~\ref{sec:method}
and~\ref{sec:exp}.

\emph{Rows and the aggregate.} A row $o=(\qa,\qb,y)$ holds the quantile forecasts $\qa,\qb\in\R^K$ of $K$
members at a central pair of levels $(a,1-a)$ and the outcome $y\in\R$. For a weight vector $w$ in the
simplex $\Delta_K=\{w\in\R^K: w\ge 0,\ \sum_k w_k=1\}$ the VI ensemble has the interval
$[\ell_o(w),h_o(w)]$ with $\ell_o(w)=\ip{w}{\qa}$ and $h_o(w)=\ip{w}{\qb}$. Each row has the scale
$\sigma_o=\max(\rho_o,\mu_o,\sigma_0)$, where $\rho_o$ is the larger of the ranges $\max_k\qa_k-\min_k\qa_k$
and $\max_k\qb_k-\min_k\qb_k$, $\mu_o=\frac1K\sum_k(\qb_k-\qa_k)$, and $\sigma_0>0$ is a constant;
we call this scale rule assumption B3. $\sigma_o$ does not depend on $w$. The row score and the calibrated
interval with shift $s\in\R$ are
\begin{align}
R_o(w)&=\max\bigl(\ell_o(w)-y,\ y-h_o(w)\bigr)/\sigma_o,\label{eq:score}\\
I_o(w,s)&=[\ell_o(w)-s\sigma_o,\ h_o(w)+s\sigma_o],\notag
\end{align}
with $I_o(w,+\infty)=\R$ and $I_o(w,-\infty)=\emptyset$; $I_o$ is the conformalized quantile regression
interval~\cite{Romano2019CQR} with the shift in units of $\sigma_o$. Since $\sum_k w_k=1$,
$R_o(w)=\max(\ip{w}{a_o},\ip{w}{b_o})$ with the normalised residuals $a_o=(\qa-y\one)/\sigma_o$ and
$b_o=(y\one-\qb)/\sigma_o$.

\emph{Blocks.} The unit of exchangeability is a block $U$, a nonempty finite sequence of rows, for example all
horizons of one forecast origin of one series; the block space is the disjoint union
$\bigcup_{\ell\ge1}(\R^{2K+1})^\ell$ with its Borel $\sigma$-algebra. A block rule $j$ picks
$j(\ell)\in\{1,\dots,\ell\}$, and the block score $S_U(w)$ is the $j(|U|)$-th smallest of $R_o(w)$, $o\in U$. With
$j(\ell)=\ell$, the rule of all our experiments, $S_U(w)\le s$ means that every row of $U$ is covered. For reals
$x_1,\dots,x_\ell$, $x_{(j)}$ denotes the $j$-th smallest, counted with multiplicity.

\emph{Exact shift.} Fix a miscoverage level $\alpha\in(0,1)$. Calibration blocks $U_1,\dots,U_n$ carry weights
$\omega_i\ge0$ and the test block a weight $\omega_{n+1}>0$. Let $\Omega=\sum_{i\le n}\omega_i$,
$\need=(1-\alpha)(\Omega+\omega_{n+1})$ and $C_w(s)=\sum_{i\le n}\omega_i\one\{S_{U_i}(w)\le s\}$. The exact shift is
\begin{equation}\label{eq:exact}
\qhat(w)=\min\{s\in\R: C_w(s)\ge\need\}\quad(+\infty\ \text{if}\ \need>\Omega).
\end{equation}
With unit weights, $\need=(1-\alpha)(n+1)$ and $\qhat(w)=s^*(w)$, the $k$-th smallest block score with
$k=k(n)=\lceil(1-\alpha)(n+1)\rceil$: the split-conformal shift of the configuration $w$~\cite{Vovk2012}. With
$\omega_i=\lambda^{\mathrm{age}_i}$ and $\omega_{n+1}=1$ it is the weighted threshold of Tibshirani et
al.~\cite{Tibshirani2019} and Barber et al.~\cite{BarberCandes2023}. For singleton blocks we write $s^*_A(w)$ for the unit-weight shift of a row
multiset $A$. Table~\ref{tab:notation} collects the notation.

\begin{table}[!t]
\centering
\scriptsize
\caption{Notation.}
\label{tab:notation}
\setlength{\tabcolsep}{2pt}
\begin{tabularx}{\columnwidth}{@{}lL@{}}
\toprule
$K$, $w$, $d$; $R_o$, $S_U$, $j$ & members, weights, face dim.; row/block score, rule\\
$\alpha$, $k(n)$, $\need$, $\Omega$ & miscoverage, split rank, weighted rank, total weight\\
$\qhat$, $s^*$, $\shat$ & exact (weighted) shift, unit-weight shift, state answer\\
$N$, $N_g$, $d_m(w)$ & net, lattice net, $\ell_1$ distance of $w$ to net point $w_m$\\
$G=\{s_1<\dots<s_J\}$, $\gamma_G$ & grid of $J$ points, grid gap\\
$\Phi$, $\varphi$, $\lambda$, $\tau$, $M$ & state, increment, decay, time stamp, \#queries\\
$I=[s_{\mathrm{lo}},s_{\mathrm{lo}}+B]$, $\epsilon$ & shift window, tightness\\
$\zeta$, $T_{\mathcal U}$, $p$ & gadget vector, number of types, face size bound\\
$\kappa$, $\delta$ & $\lfloor(B-\frac c2\Delta_{\mathcal U})/\delta\rfloor$, step $c/g^2$ (Thm.~\ref{thm:lb})\\
$\delta'$, $\epsilon'$, $\xi_0$, $\xi_{\delta'}$ & confidence, PAC level, DKW slacks (Cor.~\ref{cor:sel})\\
\bottomrule
\end{tabularx}
\end{table}

\begin{definition}[State classes]\label{def:scheme}
A \emph{state} is a map $\Phi$ from finite multisets of rows or blocks to a set $\mathcal S$ with a query
$\shat(\Phi(A),|A|,w)$ that reads only its arguments. It is \emph{mergeable} if some $\oplus$ satisfies
$\Phi(A\uplus A')=\Phi(A)\oplus\Phi(A')$ for all $A,A'$, and \emph{integer-additive} if $\mathcal S=\Z^D$ and
$\Phi(A)=\sum_{o\in A}\varphi(o)$. It is \emph{valid} on $W\subseteq\Delta_K$ if $\shat\ge s^*_A(w)$ for every
$A$ and $w\in W$ (pointwise dominance), and \emph{$\epsilon$-tight} on $(W,I)$, $I=[s_{\mathrm{lo}},s_{\mathrm{lo}}+B]$
with $B\ge0$, if $\shat\le s^*_A(w)+\epsilon$ whenever $w\in W$ and $s^*_A(w)\in I$.
\end{definition}

\begin{problem}[What-if calibration query]\label{prob:whatif}
\emph{Ingest:} a rows table with a group key (series, market) and a time stamp per block; keep, per group and time
pane (a \emph{cell}), a state of fixed size. \emph{Query, after ingest:} a union $\mathcal C$ of cells (a roll-up,
complement or sliding window, or what remains after deleting entities) and $M$ weight vectors $w\in\Delta_K$ that
were not known at ingest. \emph{Answer:} shifts $\shat(w)$ for the blocks of $\mathcal C$ with (a)
$\shat(w)\ge\qhat(w)$, (b) a small excess $\shat(w)-\qhat(w)$, and (c) a cost that grows with the number of cells in
$\mathcal C$ but not with the rows they hold.
\end{problem}

Requirement (a) makes every returned interval contain the exact conformal interval, so every guarantee of
the exact shift carries over; (b) is what the lower bounds of Section~\ref{sec:bounds} constrain.

\emph{The workload: configuration search.} For a union $\mathcal C$ and a candidate set $\mathcal W\subset\Delta_K$
of size $M$, a search returns the minimiser over $\mathcal W$ of the calibrated normalised width
\begin{equation}\label{eq:width}
f(w)=\ip{w}{\bar\beta}+2\qhat(w),
\end{equation}
the mean of $(h_o(w)-\ell_o(w))/\sigma_o+2\qhat(w)$ over the calibration rows of $\mathcal C$, where
$\bar\beta\in\R^K$ is the mean of $(\qb-\qa)/\sigma_o$. In the running example $\mathcal C$ is a quarter and
$\mathcal W$ holds $10^4$ Dirichlet draws; evaluating $f$ exactly scores every calibration row of $\mathcal C$ for
every candidate, even when the rows are read once for all of them. Section~\ref{sec:method} answers Problem~\ref{prob:whatif} with a state and uses it to screen
$\mathcal W$, so that rows are read for a few candidates only.

\section{The \calqa State}\label{sec:method}

\begin{figure}[!t]
\centering
\includegraphics[width=\columnwidth]{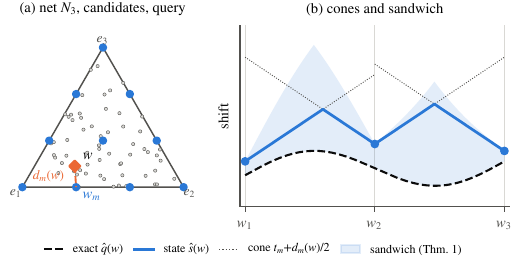}
\caption{Intuition (schematic, not data). (a) The weight simplex for $K{=}3$ with the lattice net $N_3$, candidates of a
search (grey) and a query $w$ at $\ell_1$ distance $d_m(w)$ from its nearest net point. (b) Along a segment through three net
points: the state stores thresholds $t_m$ (dots) at the net points; each spreads a cone of slope $\frac12$ in $\ell_1$
(Lemma~\ref{lem:lip}), and the state answers with their lower envelope $\hat s(w)$, which never falls below the exact
shift $\hat q(w)$ and exceeds it by at most $\min_m[d_m(w)+\gamma_G(r^*_m)]$ (shaded, Theorem~\ref{thm:state}(i)).}
\label{fig:intuition}
\end{figure}

\subsection{State and Query}

\begin{definition}[Net-score state]\label{def:state}
Fix a finite net $N=\{w_1,\dots,w_{|N|}\}\subset\Delta_K$ (default: the lattice
$N_g=\Delta_K\cap\frac1g\Z^K$) and a grid $G=\{s_1<\dots<s_J\}$ with bins $\beta_0=(-\infty,s_1]$,
$\beta_b=(s_b,s_{b+1}]$ and $\beta_J=(s_J,\infty)$. The increment $\varphi(U)\in\Z^{|N|(J+1)+1}$ of a block has
the entries $\varphi(U)[m,b]=\one\{S_U(w_m)\in\beta_b\}$ and a last entry $1$. For blocks with time stamps
$\tau_U<\tau$ and $\lambda\in(0,1]$, the state is $\Phi_\tau(A)=\sum_{U\in A}\lambda^{\tau-\tau_U}\varphi(U)$; for
$\lambda=1$ we write $\Phi(A)$. The query reads $\Omega$ as the last entry, sets
$C_m(s_j)=\sum_{b<j}\Phi_\tau[m,b]$ and $\need=(1-\alpha)(\Omega+1)$ (the value of $\need$ in~\eqref{eq:exact} for
$\omega_{n+1}=1$), and returns
\begin{equation}\label{eq:query}
t_m=\min\{s_j: C_m(s_j)\ge\need\},\quad \shat(w)=\min_{m}\bigl(t_m+\tfrac12 d_m(w)\bigr),
\end{equation}
with $d_m(w)=\norm{w-w_m}$ and $\min\emptyset=+\infty$. The grid gap is $\gamma_G(r)=\min\{s\in G:s\ge r\}-r$
for $r\le s_J$ and $+\infty$ otherwise.
\end{definition}

\emph{Running example, continued.} Take a DE quarter with $n=54$ calibration days. For a candidate $w$, exact split conformal computes the 24 row
scores of every calibration day, takes each day's maximum and returns the $k=\lceil0.9\cdot55\rceil=50$th smallest
of the 54 day scores; a search over $10^4$ candidates repeats this $10^4$ times over the $54\cdot24$ rows. The state
instead keeps, for each of the 84 points of $N_6$, a histogram of the 54 day scores over $J+1=65$ bins: $84\cdot65+1$
integers, 21,844 bytes, however many days the quarter holds. For each $w$ it reads the bin edge $t_m$ at which the
cumulative count reaches 50 at every net point and returns $\min_m(t_m+\frac12d_m(w))$. As every $w$ lies within
$\ell_1$ distance $\frac13$ of $N_6$ (Lemma~\ref{lem:round}), this answer never falls below the exact shift and
exceeds it by at most $\frac13$ plus a grid gap (Theorem~\ref{thm:state}(i)); merging two quarters adds their
histograms.

\begin{lemma}[Lipschitz transfer]\label{lem:lip}
Under B3, for every row $o$ with score~\eqref{eq:score}, block $U$ and $w,w'\in\Delta_K$:
(i) $|R_o(w)-R_o(w')|\le\frac12\norm{w-w'}$; (ii) $|S_U(w)-S_U(w')|\le\frac12\norm{w-w'}$;
(iii) for $s\in[-\infty,\infty]$, $S_U(w)\le s$ if and only if at least $j(|U|)$ rows $o$ of $U$ have
$y_o\in I_o(w,s)$.
\end{lemma}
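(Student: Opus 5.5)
We argue directly from the representation $R_o(w)=\max(\ip{w}{a_o},\ip{w}{b_o})$, which holds because $\sum_k w_k=1$. Since $|\max(x,y)-\max(x',y')|\le\max(|x-x'|,|y-y'|)$, part (i) reduces to showing $|\ip{w-w'}{a_o}|\le\frac12\norm{w-w'}$, and the same for $b_o$.

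\textbf{Part (i).} Put $v=w-w'$. Because $v$ sums to zero, we may subtract any constant vector: $\ip{v}{a_o}=\ip{v}{a_o-c\one}$ for every $c$. We take $c$ to be the midpoint of $\min_k a_{o,k}$ and $\max_k a_{o,k}$. Then every coordinate of $a_o-c\one$ has absolute value at most half the range of $a_o$. That range equals $(\max_k\qa_k-\min_k\qa_k)/\sigma_o\le\rho_o/\sigma_o\le1$, using $\sigma_o\ge\rho_o$ from B3. Hölder's inequality then gives $|\ip{v}{a_o}|\le\frac12\norm{v}$. The argument for $b_o$ is identical, since its range is also the range of $\qb$ divided by $\sigma_o$. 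Equivalently, split $v=v^+-v^-$ with $\norm{v^+}=\norm{v^-}=\frac12\norm{v}$, so that $\ip{v}{a_o}$ is at most $\frac12\norm{v}$ times the range of $a_o$. Note that only $\sigma_o\ge\rho_o$ is used; the terms $\mu_o$ and $\sigma_0$ in B3 serve only to make $\sigma_o>0$ and to fix the scale.

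\textbf{Part (ii).} Order statistics are $1$-Lipschitz in the sup norm: if $|x_i-x_i'|\le\eta$ for all $i$, then $|x_{(j)}-x'_{(j)}|\le\eta$. This follows by counting. At least $j$ of the $x_i$ are $\le x_{(j)}$, so at least $j$ of the $x'_i$ are $\le x_{(j)}+\eta$, giving $x'_{(j)}\le x_{(j)}+\eta$; the other direction is symmetric. We apply this with $\eta=\frac12\norm{w-w'}$ from (i), and with $j=j(|U|)$ fixed, since $U$ is fixed.

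\textbf{Part (iii).} For finite $s$ and $\sigma_o>0$ we have $R_o(w)\le s$ if and only if $\ell_o(w)-y\le s\sigma_o$ and $y-h_o(w)\le s\sigma_o$, that is, $y\in I_o(w,s)$. This holds even when $s$ is negative enough that the interval is empty, because both sides are then false. Also, $S_U(w)\le s$ if and only if at least $j$ of the row scores are $\le s$, since $x_{(j)}\le s$ exactly when $\#\{i:x_i\le s\}\ge j$. For $s=+\infty$ both sides hold, since $I_o=\R$. For $s=-\infty$ both sides fail, since $I_o=\emptyset$ and all scores are finite.

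The only step that needs care is the centering in (i). Using the range, not the magnitude, of the normalised residuals is what makes the bound independent of $y$ and of the level of the series. The rest is routine.
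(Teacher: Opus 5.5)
Your proof is correct and follows the paper's argument. In (i), $w-w'$ sums to zero, so centering at the midpoint of the range bounds each inner product by $\frac12\norm{w-w'}\rho_o/\sigma_o\le\frac12\norm{w-w'}$, exactly as the paper does; the only cosmetic difference is that in (ii) you prove the $1$-Lipschitz property of order statistics by the counting characterisation $x_{(j)}\le s\iff\#\{i:x_i\le s\}\ge j$ instead of the paper's equivalent min--max formula $x_{(j)}=\min_{|S|=j}\max_{i\in S}x_i$.
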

\emph{Intuition.} As $\sum_kw_k=1$, the outcome $y$ cancels from $R_o(w)-R_o(w')$, which leaves weighted differences of member quantiles; each is at most half the quantile range $\rho_o\le\sigma_o$ times $\norm{w-w'}$. A $j$-th smallest value moves no more than the values it is taken from, and (iii) rewrites it as a count of covered rows.
\fullproof{A.1}{\ref{app:lip}}

\begin{lemma}[Weighted conservative quantisation]\label{lem:wc}
Let $v_1,\dots,v_n\in\R$, $\omega_i\ge0$, $\need>0$, $C(s)=\sum_i\omega_i\one\{v_i\le s\}$, $\Omega=\sum_i\omega_i$,
$r^*=\min\{s\in\R:C(s)\ge\need\}$ and $t=\min\{s\in G:C(s)\ge\need\}$ (each $+\infty$ if the set is empty).
(i) If $\need\le\Omega$, then $\{s:C(s)\ge\need\}=[r^*,\infty)$ and $r^*\in\{v_i\}$. (ii) $t=r^*+\gamma_G(r^*)$.
\end{lemma}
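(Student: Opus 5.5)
The argument rests on two elementary facts about the weighted counting function $C$: it is nondecreasing and right-continuous, and it is piecewise constant with jumps only at the values $v_i$. I would establish these first and then treat parts (i) and (ii) in turn.

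For part (i), assume $\need\le\Omega$. For $s\ge\max_i v_i$ we have $C(s)=\Omega\ge\need$, so the set $\{s:C(s)\ge\need\}$ is nonempty. Since $C$ is monotone, the set is an up-set, that is, an interval of the form $[r,\infty)$ or $(r,\infty)$. It is bounded below: for $s<\min_i v_i$ we have $C(s)=0<\need$, and here $\need>0$ is used. Right-continuity of $C$ then shows the infimum is attained, so the set equals $[r^*,\infty)$ and the minimum in the definition of $r^*$ exists.

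To show $r^*\in\{v_i\}$, I would argue by contradiction. If $r^*$ is not one of the $v_i$, then $C$ is constant on a small interval $(r^*-\eta,r^*]$, because there are finitely many $v_i$ and none equals $r^*$. Hence $C(r^*-\eta/2)=C(r^*)\ge\need$, which contradicts the minimality of $r^*$. This step is routine; the only care needed is that the zero-weight points $v_i$ cause no trouble, and they do not, since the claim only asks that $r^*$ be some $v_i$.

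For part (ii), I would first handle the case where the set is empty. If $\need>\Omega$, then $C\le\Omega<\need$ everywhere, so both $r^*$ and $t$ equal $+\infty$. With the convention $\gamma_G(+\infty)=+\infty$, since $+\infty>s_J$, the identity holds.

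Otherwise, by (i) the set $\{s\in G: C(s)\ge\need\}$ equals $G\cap[r^*,\infty)$. Its minimum is $\min\{s\in G:s\ge r^*\}$. If $r^*\le s_J$, this minimum equals $r^*+\gamma_G(r^*)$ by the definition of the grid gap. If $r^*>s_J$, the intersection is empty, so $t=+\infty=r^*+\gamma_G(r^*)$.

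I expect no real obstacle here. The only point to watch is that the characterization of the superlevel set in (i) is exactly what turns the grid search into "the first grid point at or above $r^*$", and I would state that reduction explicitly.
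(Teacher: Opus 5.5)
Your proposal is correct and follows the paper's proof essentially step for step. Monotonicity, right-continuity and piecewise constancy of $C$ make $\{C\ge\need\}$ the up-set $[r^*,\infty)$ with $r^*$ equal to some $v_i$, and part (ii) then reduces the grid search to the first grid point at or above $r^*$, with $t=+\infty$ both when $\need>\Omega$ and when $r^*>s_J$. Your contradiction argument for $r^*\in\{v_i\}$ spells out the paper's remark that the infimum is a jump point, and it also handles zero-weight $v_i$ correctly.
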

\emph{Intuition.} $C$ is a right-continuous step function that jumps only at the $v_i$, so its threshold is attained at some $v_i$; searching $G$ only rounds that threshold up to the next grid point.
\fullproof{A.2}{\ref{app:wc}}

\begin{theorem}[State: sandwich, algebra, size]\label{thm:state}
Let $A=\{U_1,\dots,U_n\}$ be any multiset of blocks with weights $\omega_i\ge0$ and $\omega_{n+1}>0$, and let
$N$, $G$ and $j$ be arbitrary. Let $t_m=\min\{s\in G:C_{w_m}(s)\ge\need\}$ with $C_w$ and
$\need=(1-\alpha)(\Omega+\omega_{n+1})$ as in~\eqref{eq:exact}, let $\shat(w)=\min_m(t_m+\frac12d_m(w))$, and let
$r^*_m=\qhat(w_m)$.
(i) For every $w\in\Delta_K$:
$\qhat(w)\le\shat(w)\le\qhat(w)+\min_m[d_m(w)+\gamma_G(r^*_m)]$.
(ii) $\Phi_\tau(A\uplus A')=\Phi_\tau(A)+\Phi_\tau(A')$, and $\Phi_\tau(A\setminus A')=\Phi_\tau(A)-\Phi_\tau(A')$ for
$A'\subseteq A$; $\Phi_{\tau+t'}(A)=\lambda^{t'}\Phi_\tau(A)$ for integers $t'\ge0$, so
$\Phi_{\tau+1}(A_{\le\tau})=\lambda(\Phi_\tau(A_{<\tau})+H_\tau)$, where $A_{<\tau}$ and $A_{\le\tau}$ hold the blocks
stamped before and up to $\tau$ and $H_\tau$ is the sum of the increments of the blocks stamped $\tau$.
(iii) With $\omega_i=\lambda^{\tau-\tau_i}$ and $\omega_{n+1}=1$, the query~\eqref{eq:query} computes these $t_m$ from
$\Phi_\tau$ alone, and $0<\omega_i\le\omega_{n+1}$.
(iv) For $\lambda=1$, $\Phi(A)\in\Z^{|N|(J+1)+1}$ with every entry in $[0,n]$; the number of entries depends
neither on $n$ nor on block sizes nor on the number of queried $w$, and $+$ and $-$ are exact.
(v) \emph{Pane state.} Let $A=P_1\uplus\dots\uplus P_R$ be split into panes with integer states $\Phi(P_r)$, and let
$\pi_r\in[0,1]$ be rational pane weights fixed in advance. Then $\Phi^\pi=\sum_r\pi_r\Phi(P_r)$ is the state of $A$
with block weights $\omega_i=\pi_{r(i)}$ ($U_i\in P_{r(i)}$), the query~\eqref{eq:query} on $\Phi^\pi$ returns the
$\shat$ of (i) for these weights and $\omega_{n+1}=1$, the counts $C_{w_m}(s_j)$, $\Omega$ and $\need$ it uses are exact
rationals, and adding, dropping or reweighting a pane changes $\Phi^\pi$ by one exact term.
\end{theorem}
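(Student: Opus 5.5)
The plan is to prove (i) first; it carries the content. Parts (ii)--(v) then follow from linearity of the definition $\Phi_\tau(A)=\sum_{U\in A}\lambda^{\tau-\tau_U}\varphi(U)$.

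For (i), fix a net point $w_m$. Lemma~\ref{lem:wc}(ii), applied to the values $v_i=S_{U_i}(w_m)$ with the weights $\omega_i$ and $\need$, gives $t_m=r^*_m+\gamma_G(r^*_m)\ge r^*_m$.

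\emph{Lower bound.} Lemma~\ref{lem:lip}(ii) gives $S_{U_i}(w)\le S_{U_i}(w_m)+\frac12 d_m(w)$ for every $i$. Hence, for every $s$,
\[
C_{w_m}(s)\le C_w\bigl(s+\tfrac12 d_m(w)\bigr).
\]
Taking $s=r^*_m$, where $C_{w_m}(r^*_m)\ge\need$, we get $C_w(r^*_m+\frac12 d_m(w))\ge\need$. Therefore
\[
\qhat(w)\le r^*_m+\tfrac12 d_m(w)\le t_m+\tfrac12 d_m(w).
\]
Taking the minimum over $m$ yields $\qhat(w)\le\shat(w)$.

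\emph{Upper bound.} Run the same argument with the roles of $w$ and $w_m$ swapped. This gives $r^*_m\le\qhat(w)+\frac12 d_m(w)$, so
\[
t_m+\tfrac12 d_m(w)=r^*_m+\gamma_G(r^*_m)+\tfrac12 d_m(w)\le\qhat(w)+d_m(w)+\gamma_G(r^*_m).
\]
Minimising over $m$ gives the claim.

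The main obstacle is the infinite cases, and I would check them explicitly. If $\need>\Omega$, both $\qhat$ and all $t_m$ equal $+\infty$, so both bounds hold trivially. If $r^*_m>s_J$, then $\gamma_G=+\infty$ and that term of the upper-bound minimum is vacuous. If the set $\{s:C_w(s)\ge\need\}$ is empty, $\qhat(w)=+\infty$ and the lower bound needs nothing. I also need the threshold to be attained; Lemma~\ref{lem:wc}(i) supplies this, since it says the feasible set is a closed ray $[r^*,\infty)$. The condition $\omega_{n+1}>0$ enters only through the value of $\need$.

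For (ii), additivity over $\uplus$ is immediate from the sum. Subtraction for $A'\subseteq A$ follows because $A=(A\setminus A')\uplus A'$. Time shifting uses $\lambda^{\tau+t'-\tau_U}=\lambda^{t'}\lambda^{\tau-\tau_U}$. For the recursion, split $A_{\le\tau}=A_{<\tau}\uplus\{\text{blocks stamped }\tau\}$, evaluate at $\tau+1$, and note that blocks stamped $\tau$ contribute $\lambda^{1}\varphi(U)$.

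For (iii), the last coordinate of $\Phi_\tau$ is $\sum_i\omega_i=\Omega$. The entries $\Phi_\tau[m,b]$ are the weighted counts of $S_{U_i}(w_m)\in\beta_b$, and since the bins partition $\R$, summing over $b<j$ gives exactly $C_{w_m}(s_j)$. The query's $\need=(1-\alpha)(\Omega+1)$ is the weighted rank with $\omega_{n+1}=1$. Because the stamps satisfy $\tau_i<\tau$ and $\lambda\in(0,1]$, we have $0<\omega_i\le 1$.

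For (iv), with $\lambda=1$ the entries are counts, each between $0$ and $n$. The dimension $|N|(J+1)+1$ is fixed by the definition, and integer addition and subtraction are exact.

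For (v), linearity gives $\sum_r\pi_r\Phi(P_r)=\sum_i\pi_{r(i)}\varphi(U_i)$, which is the weighted state with $\omega_i=\pi_{r(i)}$. The reasoning of (iii) then identifies the query output with the $\shat$ of (i). Rationality of the $\pi_r$ makes every count, $\Omega$ and $\need$ an exact rational. Adding, dropping or reweighting pane $r$ changes the state by the single term $\Delta\pi_r\,\Phi(P_r)$.
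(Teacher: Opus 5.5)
Your proposal is correct and follows the paper's proof: both directions of (i) shift the weighted counts by $\frac12 d_m(w)$ via Lemma~\ref{lem:lip}(ii) and then apply Lemma~\ref{lem:wc}, and (ii)--(v) come from linearity of $\Phi_\tau$. The only cosmetic difference is that your lower bound goes through $r^*_m\le t_m$ instead of using $C_{w_m}(t_m)\ge\need$ directly.

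There is one slip in your case check. When $\qhat(w)=+\infty$, it is the upper bound that is trivial, whereas the lower bound needs $\shat(w)=+\infty$. Your case $\need>\Omega$ already supplies this, because $\qhat(w)=+\infty$ exactly when $\need>\Omega$.
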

\emph{Intuition.} (i) By Lemma~\ref{lem:lip}, a block within $t_m$ at $w_m$ is within $t_m+\frac12d_m(w)$ at $w$, so every cone of Figure~\ref{fig:intuition} lies above $\qhat(w)$; conversely, moving from $w$ to $w_m$, rounding up to the grid and coming back along the cone costs at most $d_m(w)+\gamma_G(r^*_m)$. (ii)--(v) hold because the state is a weighted sum of per-block increments whose cumulative bin counts are the weighted counts $C_{w_m}(s_j)$.
\fullproof{A.3}{\ref{app:state}}

\emph{Choosing the net and the grid.} Theorem~\ref{thm:state}(i) separates the two sources of excess. The net term
is at most the $\ell_1$ distance from $w$ to the net; for the lattice $N_g$ and a weight vector with $p$ nonzero
entries, Lemma~\ref{lem:round} bounds it by $r_p(g)\le p/(2g)$. Any finite net fixed in advance is allowed, so
weight vectors that will be queried often (equal weights of a subset, declared stacking weights) can be added as net
points, where the net term vanishes. A grid at pilot quantiles of the net scores puts its points where the
thresholds are. Neither choice may look at the calibration or test blocks (B0 below), since a data-dependent net or
grid would differ across shards and break both merging and the rank argument. For $\lambda<1$ the query is finite
only if $\need\le\Omega$, i.e.\ $\Omega\ge(1-\alpha)/\alpha$; with one block per age $\Omega<\lambda/(1-\lambda)$, so at
$\alpha=0.1$ decay needs $\lambda>0.9$.

\subsection{SQL Realisation}

\begin{figure}[!t]
\centering
\scriptsize
\begin{verbatim}
-- rows(grp,pane,blk,qa1..qaK,qb1..qbK,y,sig),
-- net(nid,w1..wK), grid(bin,s): bin 0..J-1 for
-- s_1<..<s_J, bin J = overflow; "+..+": K terms
CREATE TABLE st AS WITH bs AS (
  SELECT r.grp, r.pane, r.blk, n.nid,
    max(greatest(n.w1*(r.qa1-r.y)+..+n.wK*(r.qaK-r.y),
                 n.w1*(r.y-r.qb1)+..+n.wK*(r.y-r.qbK))
        / r.sig) AS sb          -- block rule: all rows
  FROM rows r CROSS JOIN net n GROUP BY ALL),
bb AS (SELECT bs.*, coalesce(g.bin, J) AS bin
  FROM bs ASOF LEFT JOIN grid g ON bs.sb <= g.s)
SELECT grp, pane, nid, bin, count(*) AS cnt
FROM bb GROUP BY ALL;
CREATE TABLE st_n AS SELECT grp, pane,  -- blocks
  count(DISTINCT blk) AS n FROM rows GROUP BY ALL;
-- query over a union u(uid,grp,pane)
WITH kk AS (SELECT uid, sum(n) AS n,  -- k=ceil(.9(n+1))
  (9*(sum(n)+1)+9)//10 AS k FROM st_n
  JOIN u USING (grp,pane) GROUP BY uid),
s AS (SELECT uid, nid, bin, sum(cnt) AS cnt FROM st
           JOIN u USING (grp,pane) GROUP BY ALL),
c AS (SELECT *, sum(cnt) OVER (PARTITION BY uid, nid
      ORDER BY bin) AS cum FROM s WHERE bin < J),
tb AS (SELECT c.uid, nid, min(bin) AS b FROM c
       JOIN kk USING (uid) WHERE cum >= k GROUP BY ALL),
tt AS (SELECT kk.uid, n.nid, CASE WHEN kk.k > kk.n
       OR tb.b IS NULL THEN 'inf'::DOUBLE ELSE g.s END AS tm
  FROM kk CROSS JOIN net n LEFT JOIN tb ON tb.uid=kk.uid
  AND tb.nid=n.nid LEFT JOIN grid g ON g.bin=tb.b)
SELECT tt.uid, q.qid, min(tt.tm + 0.5*(abs(q.w1-n.w1)
       +..+abs(q.wK-n.wK))) AS shift
FROM tt JOIN net n USING (nid) CROSS JOIN queries q
GROUP BY ALL;       -- one row per (uid,qid), +inf allowed
\end{verbatim}
\vspace{-6pt}
\caption{Build and query of \calqa (DuckDB), unit weights, block rule $j(\ell)=\ell$, $\alpha=0.1$; \texttt{st\_n}
counts blocks per cell and \texttt{kk} holds $n$ and $k$ per union. Every net point is kept by the LEFT JOINs, with $t_m=+\infty$ when
$k>n$ or the $k$-th score lies in the overflow bin, so a query whose net points are all $+\infty$ returns $+\infty$.}
\label{fig:sql}
\end{figure}

Figure~\ref{fig:sql} shows the statements, which are generated for each $K$; the executed files are in the code repository
(Section~\ref{sec:concl}). The rows table holds one row per series, time and block with the member
quantiles, the outcome, the group, pane and block keys and the precomputed scale $\sigma_o$ (wide schema; a long
(member, value) schema replaces the $K$-term sums by a join and a GROUP BY). A cross join with the net table computes
$R_o(w_m)$ for every row and net point; a MAX per block and net point gives $S_U(w_m)$ for $j(\ell)=\ell$ (a window
rank gives a general $j$); an as-of join with the grid assigns the bin, with scores above $s_J$ in the overflow bin
$J$, and a COUNT per (group, pane, net point, bin), with the block count, is the state. Roll-up is a SUM over cells, a
complement is the total minus the group, appending a batch adds its state, deleting entities subtracts the
increments of their blocks, and an expired pane is subtracted or dropped (Theorem~\ref{thm:state}(ii)). The query
sums the states of the $|\mathcal C|$ cells of the union in $O(|\mathcal C||N|J)$, computes the $t_m$ with a cumulative
count in $O(|N|J)$ and each $\shat(w)$ in $O(|N|K)$, so $M$ post-hoc configurations cost
$O(|\mathcal C||N|J+M|N|K)$ without touching a row, against $O(n_{\mathrm{rows}}|N|K)$ for the build. Decayed
and pane-weighted queries replace the count by the weighted sum and $k$ by
$\need=(1-\alpha)(\Omega+1)$ (Theorem~\ref{thm:state}(iii),(v)). The state takes $4(|N|(J+1)+1)$ bytes with 4-byte
counts: 21,844 bytes for $K=4$, $g=6$ ($|N|=84$) and $J=64$, and 32,764 bytes for $K=6$, $g=4$ ($|N|=126$); a sparse
encoding keeps only nonzero counts. Stored per-configuration counts, baseline (e), keep $J+1$ bins per declared
configuration, 260 bytes each, so the same bytes hold 84 (resp.\ 126) configurations, exact for those only.

\emph{Screening a configuration search.} The search of Section~\ref{sec:prelim} minimises $f$ of~\eqref{eq:width}
over $\mathcal W$. The $K$ sums and the row count behind $\bar\beta$ are $K+1$ additive real-valued entries kept beside
the integer state ($4(K+1)$ bytes), so $\ip{w}{\bar\beta}$ is computed from the state for every union. With
$e(w)=\min_m[d_m(w)+\gamma_G(r^*_m)]$, the state score $\hat f(w)=\ip{w}{\bar\beta}+2\shat(w)$ satisfies
$f(w)\le\hat f(w)\le f(w)+2e(w)$: Theorem~\ref{thm:state}(i) times two plus the same finite term (both sides are
$+\infty$ if $\qhat(w)=+\infty$). A \emph{state-only} plan returns the minimiser $\hat w_1$ of $\hat f$ with the shift
$\shat$. A \emph{two-stage} plan ranks $\mathcal W$ by $\hat f$, computes $f$ from the rows of the union for the $k$
best (exactly, or from score-sufficient rows rounded up to float16 as in Section~\ref{sec:exp-search}, an upper bound $f_{16}\ge f$) and returns
the best of those, $\hat w_2$, with its exact shift; the screen costs $O(|\mathcal C||N|J+M|N|K)$ and rows are scored
for $k$ candidates instead of $M$ (Figure~\ref{fig:dataflow}). A candidate never looks better to the screen than it
is. For the minimiser $w^\star$ of $f$ over $\mathcal W$,
$f(\hat w_1)\le\hat f(\hat w_1)\le\hat f(w^\star)\le f(w^\star)+2e(w^\star)$. With an exact re-check, $\hat w_1$ is
re-checked, so $f(\hat w_2)\le f(\hat w_1)$: the two-stage pick is never worse than the state-only pick. It attains
$f(w^\star)$ whenever at most $k$ candidates have $f(v)\le\hat f(w^\star)$, since every $v$ ranked at or ahead of
$w^\star$ has $f(v)\le\hat f(v)\le\hat f(w^\star)$; otherwise it may miss $w^\star$. With a float16 re-check,
$f(\hat w_2)\le f_{16}(\hat w_2)\le f_{16}(\hat w_1)$. Corollary~\ref{cor:sel} concerns the shift $\shat$, so a pick
returned with its exact shift is not covered by it.

\subsection{Coverage for Each Post-Hoc Weight Vector}

The guarantees use the following assumptions. \emph{B0:} the net, grid, block rule, level pair, $\sigma_0$,
$\alpha$, the weight function (a function of time stamps or of pane labels fixed in advance only), the block partition and the number and
time stamps of the calibration blocks are fixed before, and independently of, the block values (a disjoint
pilot may choose them). \emph{B1:} $0\le\omega_i\le\omega_{n+1}$. \emph{B2:} $w$ is fixed or independent of
$Z=(U_1,\dots,U_{n+1})$. B3 is the scale rule of Section~\ref{sec:prelim}. Let
$\mathrm{wt}_i=\omega_i/(\Omega+\omega_{n+1})$, let $Z^{(i)}$ be $Z$ with $U_i$ and $U_{n+1}$ exchanged, and for a map
$F$ applied to each block let $F(Z)^{(i)}$ swap the same two coordinates. The \emph{score-sufficient image} of a
block is $\Sc(U)=((a_o,b_o))_{o\in U}$, and $V^w=(S_{U_1}(w),\dots,S_{U_{n+1}}(w))$. The gaps are
\begin{align*}
G_w&=\textstyle\sum_{i\le n}\mathrm{wt}_i\,\dtv\bigl(V^w,(V^w)^{(i)}\bigr),\\
G_Z&=\textstyle\sum_{i\le n}\mathrm{wt}_i\,\dtv\bigl(Z,Z^{(i)}\bigr),\\
G_{\Sc}&=\textstyle\sum_{i\le n}\mathrm{wt}_i\,\dtv\bigl(\Sc(Z),\Sc(Z)^{(i)}\bigr).
\end{align*}

\begin{theorem}[Coverage for each post-hoc $w$; transfer of Barber et al.~\cite{BarberCandes2023}]\label{thm:cov}
Assume B0--B3 and let $\shat$ be computed as in Theorem~\ref{thm:state} from the weighted counts of
$U_1,\dots,U_n$ (from the state itself for decayed or pane weights, Theorem~\ref{thm:state}(iii),(v)). For any joint
law of $Z$ and every fixed $w\in\Delta_K$, on or off the net,
\begin{align*}
&\P\bigl(\text{at least } j(|U_{n+1}|) \text{ rows } o\in U_{n+1} \text{ have } y_o\in I_o(w,\shat(w))\bigr)\\
&\qquad\ge 1-\alpha-G_w\ \ge\ 1-\alpha-G_{\Sc}\ \ge\ 1-\alpha-G_Z .
\end{align*}
$G_{\Sc}$ does not depend on $w$, the net, the grid or the block rule, so $\sup_wG_w\le G_{\Sc}$. If each block is
replaced by $c_i+b_iU_i$ ($b_i>0$, possibly random, applied to $\qa,\qb,y$ of its rows) and $\sigma_0$ binds in
no row before or after, $G_{\Sc}$ is unchanged. For random $w$ independent of $Z$, the first inequality holds with
$\E[G_w]$ in place of $G_w$; the others are unchanged.
\end{theorem}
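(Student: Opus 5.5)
The plan is to reduce everything to the exact weighted shift and then to Barber et al.'s bound applied to the score vector $V^w$.

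\emph{Step 1 (reduction to the exact shift).} By Theorem~\ref{thm:state}(i), $\shat(w)\ge\qhat(w)$ for every $w$, on or off the net. This holds deterministically for the weights in use, including the decayed and pane weights of Theorem~\ref{thm:state}(iii),(v). Intervals $I_o(w,s)$ are nested increasing in $s$. So by Lemma~\ref{lem:lip}(iii) the event in the statement contains $\{S_{U_{n+1}}(w)\le\qhat(w)\}$, and it suffices to lower-bound the probability of that event.

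\emph{Step 2 (weighted conformal bound on $V^w$).} Fix $w$ (B2). Then $\qhat(w)$ is the weighted quantile in~\eqref{eq:exact} of the real scores $V^w_1,\dots,V^w_n$. The weights $\omega_i$ are fixed by B0 independently of the data and satisfy B1. This is the setting of Barber et al.'s nonexchangeable split conformal theorem, with $V^w$ playing the role of the score vector. I would reproduce their argument rather than cite it as a black box, because our threshold puts mass $\omega_{n+1}$ at $+\infty$ implicitly through $\need$.

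The argument has three parts.
\begin{itemize}
\item Define the full weighted quantile $Q$ of all $n+1$ scores with weights $\mathrm{wt}_i$. Show that $S_{U_{n+1}}(w)>\qhat(w)$ implies $V^w_{n+1}>Q$. Lemma~\ref{lem:wc}(i) handles the attainment of thresholds here.
\item Define the strange set $\{i: V^w_i>Q\}$. Its total weight is at most $\alpha$ deterministically, and $Q$ is invariant under swapping coordinates.
\item Write $\P(V_{n+1}>Q)=\sum_i \mathrm{wt}_i\P(V_{n+1}>Q)$ and compare each term with $\P(V^{(i)}_{n+1}>Q(V^{(i)}))=\P(V_i>Q)$, losing $\dtv(V^w,(V^w)^{(i)})$. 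Summing gives miscoverage at most $\alpha+G_w$.
\end{itemize}
I expect this to be the main obstacle. The weights must be handled when $\omega_i\le\omega_{n+1}$ with $\omega_{n+1}>0$, including ties and infinite shifts, and $V_{n+1}$ must be placed at weight $\omega_{n+1}$ in the comparison.

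\emph{Step 3 (gap chain).} The map $U\mapsto S_U(w)$ factors through the score-sufficient image: $R_o(w)=\max(\ip{w}{a_o},\ip{w}{b_o})$, and $S_U$ is an order statistic of these, with the rule $j$ fixed by B0 and depending only on $|U|$, which $\Sc(U)$ retains. So $V^w=F_w(\Sc(Z))$ with $F_w$ applied coordinatewise, and that map commutes with swapping coordinates $i$ and $n+1$. Total variation does not increase under measurable maps, so $G_w\le G_{\Sc}$. Likewise $\Sc$ is a coordinatewise function of $Z$, since $\sigma_o$ is a function of the row by B3, so $G_{\Sc}\le G_Z$. $G_{\Sc}$ involves neither $w$, $N$, $G$ nor $j$, and the $\mathrm{wt}_i$ are fixed, so $\sup_w G_w\le G_{\Sc}$.

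\emph{Step 4 (invariance and random $w$).} For the affine invariance, note that $c_i+b_iU_i$ maps $\qa,\qb,y$ to $c_i+b_i(\cdot)$. The ranges $\rho_o$ and $\mu_o$ then scale by $b_i$, and since $\sigma_0$ never binds, $\sigma_o$ scales by $b_i$. Hence $a_o$ and $b_o$ are unchanged, $\Sc$ of each block is unchanged, and so $G_{\Sc}$ is unchanged even for random $b_i$. For random $w$ independent of $Z$, condition on $w$ and apply Steps 1--2 to the conditional law, which equals the law of $Z$ by independence. Integrating gives $1-\alpha-\E[G_w]$, and $\E[G_w]\le G_{\Sc}\le G_Z$ by Step 3.
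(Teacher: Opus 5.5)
Your Steps 1, 3 and 4 match the paper: domination by Theorem~\ref{thm:state}(i), contraction of total variation under blockwise maps, scale invariance of $\Sc$, and conditioning on an independent $w$. Your first bullet is also sound; in fact $\{v_{n+1}>Q(v)\}$ coincides with the paper's event $\mathcal E=\{v_{n+1}>\qhat(v_{1:n})\}$, whose threshold puts the test mass at $+\infty$.

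The gap is the claim in your second bullet that $Q$ is invariant under swapping coordinates. The identity $\P(V^{(i)}_{n+1}>Q(V^{(i)}))=\P(V_i>Q)$ in your third bullet rests on it. The weights $\mathrm{wt}_j$ belong to positions, not to values, so $Q(v^{(i)})$ puts $\mathrm{wt}_i$ on $v_{n+1}$ and $\mathrm{wt}_{n+1}$ on $v_i$. Unless $\omega_i=\omega_{n+1}$, $Q$ can move either way under the swap. For example, take $n=1$, $\mathrm{wt}=(1/3,2/3)$ (which satisfies B1), $\alpha=1/2$ and $v=(1,0)$; then $Q(v)=0$ but $Q(v^{(1)})=1$.

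Your sketch also never uses B1, yet the deterministic bound $\sum_{i\le n+1}\mathrm{wt}_i\one\{v^{(i)}\in\mathcal E\}\le\alpha$ fails without it. For $\mathrm{wt}=(2/3,1/3)$, $\alpha=1/2$ and $v=(1,0)$, the swap $v^{(1)}=(0,1)$ lies in $\mathcal E$ and the sum is $2/3$. As written, your Step~2 therefore proves only the equal-weight case, while the theorem is stated for decayed and pane weights.

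The missing idea is the paper's comparison of the swapped calibration threshold with the full one; this is exactly where B1 enters. Let $\bar F(s)=\sum_{j\le n+1}\mathrm{wt}_j\one\{v_j\le s\}$ and $q^\circ=Q(v)$. For $i\le n$, the calibration part of $v^{(i)}$ has weighted distribution function
$\bar F(s)-\mathrm{wt}_i\one\{v_i\le s\}-(\mathrm{wt}_{n+1}-\mathrm{wt}_i)\one\{v_{n+1}\le s\}$, which is at most $\bar F(s)$ by B1. For $i=n+1$ one drops only $\mathrm{wt}_{n+1}\one\{v_{n+1}\le s\}$. In both cases the threshold $\qhat(v^{(i)}_{1:n})$ is at least $q^\circ$.

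Since the test coordinate of $v^{(i)}$ is $v_i$, $v^{(i)}\in\mathcal E$ forces $v_i>q^\circ$, and the weighted sum is at most $1-\bar F(q^\circ)\le\alpha$. Because $\{v_{n+1}>Q(v)\}=\mathcal E$, your total-variation comparison can keep its event. It only needs this pointwise bound in place of the invariance claim: $\P(V^w\in\mathcal E)\le\sum_i\mathrm{wt}_i\P((V^w)^{(i)}\in\mathcal E)+G_w\le\alpha+G_w$. The rest of your proof then goes through unchanged.
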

The swap argument and the weighted threshold are those of~\cite{BarberCandes2023}, and weighted
thresholds go back to Tibshirani et al.~\cite{Tibshirani2019}; what is new is the
domination step that carries them off the net and the $w$-free, scale-invariant gap $G_{\Sc}$.

\begin{corollary}[Exchangeable blocks]\label{cor:exch}
Under B0, B2 and B3, if $Z$ is exchangeable, the coverage is at least $1-\alpha$ for each $w\in\Delta_K$ and all
weights with B1; with $\lambda=1$ this is the split-conformal guarantee for each post-hoc $w$.
\end{corollary}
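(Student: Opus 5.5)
The plan is to read Corollary~\ref{cor:exch} directly off Theorem~\ref{thm:cov}: show that exchangeability of $Z$ forces the gap $G_Z$ to vanish, and then handle the remaining assumption B1 and the $\lambda=1$ special case.

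\emph{Step 1: the gap vanishes.} If $Z=(U_1,\dots,U_{n+1})$ is exchangeable, then for every $i\le n$ the swapped vector $Z^{(i)}$ has the same law as $Z$, because a transposition is a permutation. Hence $\dtv(Z,Z^{(i)})=0$ for each $i$. Every summand of $G_Z$ then vanishes, whatever the normalised weights $\mathrm{wt}_i$ are, so $G_Z=0$. The chain in Theorem~\ref{thm:cov} gives $G_w\le G_{\Sc}\le G_Z$, so $G_w=0$ as well. Equivalently, $\Sc$ and $S_{U}(w)$ are measurable maps applied blockwise, and they preserve exchangeability.

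\emph{Step 2: invoke Theorem~\ref{thm:cov}.} Under B0, B2 and B3, and for any weights satisfying B1, the theorem applies to every fixed $w\in\Delta_K$, on or off the net. It yields coverage probability at least $1-\alpha-G_w=1-\alpha$. The event there is ``at least $j(|U_{n+1}|)$ rows are covered'', which by Lemma~\ref{lem:lip}(iii) is the coverage event for the block.

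\emph{Step 3: random $w$.} If $w$ is random but independent of $Z$ (B2), the theorem gives the bound with $\E[G_w]$ in place of $G_w$. Since $G_w=0$ for each fixed $w$, we get $\E[G_w]=0$ and the same bound holds.

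\emph{Step 4: the case $\lambda=1$.} Here $\omega_i=\omega_{n+1}=1$, so B1 holds. Then $\need=(1-\alpha)(n+1)$, and $\qhat(w)=s^*(w)$ is the $\lceil(1-\alpha)(n+1)\rceil$-th smallest block score. By Theorem~\ref{thm:state}(i), $\shat(w)\ge\qhat(w)$, so the returned interval contains the split-conformal interval of $w$. This identifies the statement with the classical split-conformal guarantee, applied to each post-hoc $w$.

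\emph{Main obstacle.} The only delicate point is a measure-theoretic one in Step 1. The block space is the disjoint union of the spaces of sequences of each length, and exchangeability must be understood on this space. The swap then acts on blocks of possibly different lengths, and equality in law is immediate from the definition of exchangeability on that product space, so I do not expect a real difficulty.
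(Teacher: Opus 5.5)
Your proposal is correct and follows the paper's proof. Both read the corollary off Theorem~\ref{thm:cov}: exchangeability gives $Z^{(i)}\overset{d}{=}Z$, so $G_Z$ and hence $G_w\le G_{\Sc}\le G_Z$ vanish. Your steps on random $w$ and on $\lambda=1$ (unit weights, so $\qhat=s^*$) spell out what the paper leaves implicit.
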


\emph{Intuition.} As $\shat(w)\ge\qhat(w)$ (Theorem~\ref{thm:state}(i)), a miss at $w$ is a miss of the exact weighted procedure on the block scores $V^w$, which the weighted swap argument bounds by $\alpha+G_w$. $V^w$ is a blockwise function of $\Sc(Z)$, and $\Sc(Z)$ of $Z$; total variation cannot grow under such maps, and $\Sc$ is unchanged by the affine maps when $\sigma_0$ does not bind. Under exchangeability every gap is zero (Corollary~\ref{cor:exch}).
\fullproof{A.4}{\ref{app:cov}}

Decayed and truncated (sliding-window) weights satisfy B1 (Theorem~\ref{thm:state}(iii),(v)); distance to the net
enters the width, never the coverage. For blocks of series with disjoint level ranges the laws of $Z$ and $Z^{(i)}$ are mutually
singular and that term of $G_Z$ equals one, while
$G_{\Sc}$ charges only for drift in the law of the normalised residuals. Neither gap can be estimated from one realised path, so Theorem~\ref{thm:cov} is a
robustness bound, not a certificate.

\emph{Weights selected from the data.} B2 excludes a $w$ tuned on the calibration blocks. Because the state
answers every $w$ through the $|N|$ net thresholds, the finite-selection argument of Yang and
Kuchibhotla~\cite{YangKuchibhotla2025} over the net covers such $w$ for i.i.d.\ blocks.

\begin{corollary}[Data-selected $w$, after~\cite{YangKuchibhotla2025}]\label{cor:sel}
Assume B0, B3, unit weights and $U_1,\dots,U_{n+1}$ i.i.d., and let $\hat w$ be any measurable function of
$U_{1:n}$ and of a random element $\vartheta$ independent of $Z$, into $\Delta_K$. Let $\xi_0=\sqrt{\log(2|N|)/(2n)}$
and, for $\delta'\in(0,1)$, $\xi_{\delta'}=\sqrt{\log(2|N|/\delta')/(2n)}$. (i) The coverage event of Theorem~\ref{thm:cov} at $\hat w$ has
probability at least $1-\alpha-\xi_0-\sqrt{\pi/(8n)}$. (ii) With probability at least $1-\delta'$ over $U_{1:n}$, the
conditional coverage given $U_{1:n}$ is at least $1-\alpha-\xi_{\delta'}$ for all $w\in\Delta_K$ simultaneously.
(iii) If the query uses a rank $k'\ge\max(k(n),n(1-\epsilon'+\xi_{\delta'}))$ in place of $k(n)$, then with probability
at least $1-\delta'$ the conditional coverage is at least $1-\epsilon'$ for all $w$ simultaneously, and
Corollary~\ref{cor:exch} still holds.
\end{corollary}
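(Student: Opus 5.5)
The plan is to reduce everything to the finitely many net points and then apply a DKW bound with a union over $N$. Write $r^*_m$ for the unit-weight shift at $w_m$, i.e.\ the $k(n)$-th smallest of $S_{U_1}(w_m),\dots,S_{U_n}(w_m)$. By Lemma~\ref{lem:wc}(ii), $t_m\ge r^*_m$. By Lemma~\ref{lem:lip}(iii), the coverage event at $w$ is $\{S_{U_{n+1}}(w)\le\shat(w)\}$.

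\emph{Step 1: reduction to the net.} Fix $w$ and let $m(w)$ be a minimiser in~\eqref{eq:query}; it depends only on $w$ and on $U_{1:n}$. Lemma~\ref{lem:lip}(ii) gives $S_{U_{n+1}}(w)\le S_{U_{n+1}}(w_{m(w)})+\frac12 d_{m(w)}(w)$. So a miss at $w$ forces $S_{U_{n+1}}(w_{m(w)})>t_{m(w)}\ge r^*_{m(w)}$. Let $F_m$ be the common CDF of $S_U(w_m)$ under the i.i.d.\ law, and $\hat F_m$ the empirical CDF of the $n$ calibration scores at $w_m$. Since $U_{n+1}$ is independent of $U_{1:n}$, for every $w$ simultaneously,
\[
\P\bigl(\text{miss at } w\mid U_{1:n}\bigr)\le 1-F_{m(w)}(r^*_{m(w)})\le \alpha+D,\qquad D=\max_{m}\sup_r\bigl(\hat F_m(r)-F_m(r)\bigr)_+ .
\]
The second inequality uses $\hat F_m(r^*_m)\ge k(n)/n\ge(1-\alpha)(n+1)/n\ge1-\alpha$. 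The bound is uniform in $w$, so it also holds at $w=\hat w(U_{1:n},\vartheta)$, since $\vartheta$ is independent of $Z$. B0 guarantees that $N$ and $G$ are fixed, so each $F_m$ is a fixed CDF.

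\emph{Step 2: concentration.} The one-sided DKW inequality (Massart) gives $\P(\sup_r(\hat F_m-F_m)>x)\le e^{-2nx^2}$. I will use the looser two-sided constant $2$, matching the stated $\xi$'s, and take a union over the $|N|$ net points: $\P(D>x)\le 2|N|e^{-2nx^2}$. Setting $x=\xi_{\delta'}$ yields (ii).

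For (i), take expectations of the conditional bound and write
\[
\E D\le \xi_0+\int_{\xi_0}^\infty 2|N|e^{-2nx^2}\,dx .
\]
Since $2|N|e^{-2n\xi_0^2}=1$ and $x^2-\xi_0^2\ge(x-\xi_0)^2$ for $x\ge\xi_0$, the integral is at most $\int_0^\infty e^{-2nu^2}\,du=\sqrt{\pi/(8n)}$. This gives exactly the stated slack.

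\emph{Step 3: the modified rank.} With rank $k'$ in place of $k(n)$, the Step 1 argument gives $\hat F_m(r'_m)\ge k'/n\ge1-\epsilon'+\xi_{\delta'}$. On the DKW event of Step 2 this yields $F_m(r'_m)\ge1-\epsilon'$ for all $m$, hence conditional coverage at least $1-\epsilon'$ for every $w$. Because $k'\ge k(n)$, every $t_m$ and hence $\shat$ can only increase. Coverage is monotone in the shift, so the event of Corollary~\ref{cor:exch} is preserved.

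\emph{Main obstacle.} The delicate step is Step 1. The union bound must be taken over the finitely many net points, not over a continuum of $w$, and $m(w)$ is data dependent; the conditional bound sidesteps this only because it holds for every $w$ at once. A minor technical point is that $\hat F_m(r^*_m)$ involves ties and right-continuity. I expect Lemma~\ref{lem:wc}(i), which places $r^*_m$ at a data value, together with the use of $\le$ in both CDFs, to handle it.
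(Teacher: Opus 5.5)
Your proposal is correct and follows the paper's proof essentially step for step: you reduce a miss at any $w$ to a miss at its minimising net point via Lemma~\ref{lem:lip}(ii), bound the conditional coverage below by $k/n-\max_m\sup_s(\hat F_m-F_m)_+$, control that maximum with the two-sided DKW--Massart bound and a union over $N$, integrate the tail for (i), and use that a larger rank only raises every $t_m$ for (iii). The paper additionally spells out three small points: the sure-coverage case $t_{\hat m}=+\infty$ (this also covers $k>n$, where your inequality $\hat F_m(r^*_m)\ge k/n$ fails but the bound holds trivially), a measurable choice of the minimising index, and the fact that DKW applies to discontinuous $F_m$ with a supremum over a countable set.
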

\emph{Intuition.} At any $w$, data-selected or not, $\shat(w)$ is a cone over one of $|N|$ net thresholds $t_m$, and a fresh block within $t_m$ at $w_m$ is covered at $w$ (Lemma~\ref{lem:lip}). Coverage is therefore at least $k/n$ minus the largest deviation of $|N|$ empirical distribution functions, which the Dvoretzky--Kiefer--Wolfowitz inequality with Massart's constant~\cite{Massart1990} and a union bound over the net control; a larger rank only raises every $t_m$.
\fullproof{A.5}{\ref{app:sel}}

Parts (ii)--(iii) hold simultaneously over $\Delta_K$ and part (i) at a data-selected $\hat w$, such as the pick of a
state-only search deployed with $\shat$; Theorem~\ref{thm:cov} and Corollary~\ref{cor:exch} hold for each fixed
$w$. The slack is the price of selection: at the calibration sizes of Section~\ref{sec:exp-bytes} and $|N|=84$, part (i) costs 0.016 (M4 Daily)
to 0.129 (DE), at the group sizes of Section~\ref{sec:exp-search} 0.022 for the largest group and 0.26--0.51 for the 478 groups with 20--72
calibration blocks, and the rank of (iii) at $(\epsilon',\delta')=(0.1,0.05)$ is finite only for $n\ge406$ (427 for
$|N|=126$).

\section{State Size: Optimal in Exponent Within the Net-Histogram Class}\label{sec:bounds}

This section bounds the size of any state that meets requirements (a) and (b) of Problem~\ref{prob:whatif}
in the worst case; the net-histogram class matches its own lower bound in exponent, while for general states a gap of
$d/2$ in the exponent stays open. Rows are \emph{legal} if $\qa\le\qb$ componentwise (L). A block state that is valid and tight
for all multisets of blocks is, on singleton blocks ($j(1)=1$, block score equal to row score), a row state with
the same properties, so lower bounds for rows apply to block states. The classes are those of
Definition~\ref{def:scheme}: \emph{deterministic} states that dominate $s^*$ on \emph{every} multiset. Summaries
that are valid only in coverage, such as block samples with split conformal, and randomized summaries lie outside
them.

\begin{lemma}[Gadgets and merge closure]\label{lem:merge}
Assume L and B3. (a) For $\zeta\in\R^K$ with $\min_k\zeta_k\ge-\frac12$ and $\max_k\zeta_k-\min_k\zeta_k\le1$,
the row $o_\zeta=(\sigma_0\zeta,\sigma_0(\zeta+\one),0)$ is legal and $R_{o_\zeta}(v)=\ip{v}{\zeta}$ for all
$v\in\Delta_K$; with $\zeta=r\one$, $R\equiv r$ for every $r\ge-\frac12$.
(b) Let $(\Phi,\shat)$ be mergeable, valid on $W$ and $\epsilon$-tight on $(W,I)$ with $s_{\mathrm{lo}}>-\frac12$. Let
$|A|=|A'|$ and $\Phi(A)=\Phi(A')$, let $w\in W$ and $s\le s_{\mathrm{lo}}+B$ be such that every $o\in A$ with
$R_o(w)\le s$ has $R_o(w)\ge s_{\mathrm{lo}}$, and let $\epsilon''>\epsilon$. Then
$n_s:=\#\{o\in A:R_o(w)\le s\}\le n'_s:=\#\{o\in A':R_o(w)\le s+\epsilon''\}$.
\end{lemma}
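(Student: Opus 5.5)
For (a) I would check the claims directly. With $\qa=\sigma_0\zeta$ and $\qb=\sigma_0(\zeta+\one)$, legality L is immediate.

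For the scale, $\rho_o=\sigma_0(\max_k\zeta_k-\min_k\zeta_k)\le\sigma_0$ and $\mu_o=\sigma_0$, so B3 gives $\sigma_o=\sigma_0$. Since $y=0$ and $\sum_k v_k=1$, the score is $R_{o_\zeta}(v)=\max(\ip{v}{\zeta},-\ip{v}{\zeta}-1)$. Because $\min_k\zeta_k\ge-\frac12$, we have $\ip{v}{\zeta}\ge-\frac12$, so the first term dominates and $R_{o_\zeta}(v)=\ip{v}{\zeta}$. The case $\zeta=r\one$ with $r\ge-\frac12$ gives $R\equiv r$.

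For (b) I would use a padding argument. If $n_s=0$ there is nothing to prove, so assume $n_s\ge1$. Using (a), build a multiset $P$ of $p$ \emph{low} rows with constant score $r_{\mathrm{lo}}\in(-\frac12,s_{\mathrm{lo}})$ and $q$ \emph{high} rows with constant score $r_{\mathrm{hi}}>s+\epsilon''$. Choose $p$ and $q$ so that the split rank of the enlarged multiset satisfies
\[
k(|A|+p+q)=p+n_s .
\]
Existence is a discrete intermediate-value argument. Raising $q$ by one changes $k(|A|+p+q)-p$ by $0$ or $1$, and this quantity is unbounded in $q$. Raising $p$ by one changes it by $0$ or $-1$, and it drifts to $-\infty$ like $-\alpha p$. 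So start at $p=0$ with $q$ large enough that the value is at least $n_s$, then increase $p$ one step at a time; the value passes through every integer down to $n_s$ and hits it exactly.

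With this choice, the $k$-th smallest score of $A\uplus P$ at $w$ is the $n_s$-th smallest score of $A$. The reason is that the $p$ low rows lie strictly below every row of $A$ with score $\le s$ (those rows have score $\ge s_{\mathrm{lo}}>r_{\mathrm{lo}}$), and the high rows lie above $s$. Hence $s^*_{A\uplus P}(w)\in[s_{\mathrm{lo}},s]\subseteq I$, and $\epsilon$-tightness gives $\shat(\Phi(A\uplus P),|A\uplus P|,w)\le s+\epsilon$.

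Mergeability then transfers this to $A'$. We have $\Phi(A\uplus P)=\Phi(A)\oplus\Phi(P)=\Phi(A')\oplus\Phi(P)=\Phi(A'\uplus P)$, and the sizes agree, so the query returns the same value on $A'\uplus P$. Validity then gives $s^*_{A'\uplus P}(w)\le s+\epsilon<s+\epsilon''$. Therefore at least $k=p+n_s$ rows of $A'\uplus P$ have score $\le s+\epsilon''$. The high rows have score $>s+\epsilon''$ and are excluded, and at most $p$ of the counted rows are low rows. So at least $n_s$ of them lie in $A'$, which is $n'_s\ge n_s$.

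The main obstacle is the rank-matching step: pinning the split rank exactly at $p+n_s$, with the ceiling in $k(m)$, using only nonnegative padding. The monotone one-step argument above handles it. The strict gaps $r_{\mathrm{lo}}<s_{\mathrm{lo}}$ and $r_{\mathrm{hi}}>s+\epsilon''$ keep ties from spoiling the order-statistic identification, and the strict inequality $\epsilon''>\epsilon$ is what allows the high rows to be excluded in the final count.
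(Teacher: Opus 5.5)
Your proof is correct and is essentially the paper's argument: part (a) is the same direct check, and in (b) the paper likewise pads $A$ and $A'$ with the same constant-score gadget rows (low rows at $-\frac12$, high rows above $s+\epsilon''$) so that exactly $k$ scores of the padded $A$ lie at or below $s$, then plays tightness on one padded multiset against validity on the other via mergeability. The differences are cosmetic: the paper argues by contradiction, and it avoids your intermediate-value search by fixing the total size $N'$ first, with $k(N')\ge n_s$ and $N'-k(N')\ge|A|-n_s$, and then taking $k(N')-n_s$ low rows.
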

\emph{Intuition.} (a) gives rows whose score is linear in $v$. For (b), both multisets are padded with the same low and high rows so that the $k$-th smallest score lies in $[s_{\mathrm{lo}},s]$ for one and above $s+\epsilon''$ for the other; mergeability keeps their states equal, so their common answer cannot be within $\epsilon$ of the first shift (tightness) and above the second (validity).
\fullproof{A.6}{\ref{app:merge}}

Merge closure thus forces a valid and tight state to determine, at every query point and threshold in the
window, the count $\#\{o:R_o(w)\le s\}$ up to an $\epsilon$-shift of the threshold: fuzzy-boundary counting in
the sense of Arya and Mount~\cite{AryaMount2000}, with sublevel sets of the score as ranges.

\begin{theorem}[General lower bound]\label{thm:lb}
Let $(\Phi,\shat)$ be mergeable, valid on $W$ and $\epsilon$-tight on $(W,I)$ with $s_{\mathrm{lo}}\ge0$, under L
and B3. Let $g\ge1$, $c\in(0,1]$, $\delta=c/g^2>\epsilon$, let $\emptyset\ne\mathcal U\subseteq N_g\cap W$ with
$\Delta_{\mathcal U}=\max_{\mathcal U}|u|^2-\min_{\mathcal U}|u|^2$ (Euclidean norm) and $B\ge\frac c2\Delta_{\mathcal U}$, and
$T_{\mathcal U}=(\lfloor(B-\frac c2\Delta_{\mathcal U})/\delta\rfloor+1)|\mathcal U|$.
(a) $\Phi$ takes at least $\binom{n+T_{\mathcal U}-1}{T_{\mathcal U}-1}$ values on multisets of size $n$, for every
$n\ge1$, so a $b$-bit state needs $b\ge\log_2\binom{n+T_{\mathcal U}-1}{T_{\mathcal U}-1}$. (b) An integer-additive
$\Phi$ has $D\ge T_{\mathcal U}-1$. (c) Let $W$ contain a $d$-face $\Delta_F=\{w\in\Delta_K:\mathrm{supp}\,w\subseteq F\}$
($|F|=d+1$, $d\ge1$), and let $\epsilon<\frac14$ and $B\ge\frac12$. Then $c=1$, $g=\lceil\epsilon^{-1/2}\rceil-1$ and
$\mathcal U=N_g\cap\Delta_F$ are admissible, and $T_{\mathcal U}\ge(\lfloor(B-\frac12)g^2\rfloor+1)\binom{g+d}{d}$,
which is $\Omega_d((B-\frac12)\epsilon^{-(1+d/2)}+\epsilon^{-d/2})$.
\end{theorem}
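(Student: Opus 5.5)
The plan is to build a family of $T_{\mathcal U}$ ``types'' of gadget rows and use Lemma~\ref{lem:merge}(b) to show that the state must separate every pair of distinct multisets of size $n$ drawn from these types. Counting multisets then gives (a), linear algebra gives (b), and a lattice count gives (c).

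\emph{Construction.} For each $u\in\mathcal U$ and each level index $\ell\in\{0,\dots,\kappa\}$, $\kappa=\lfloor(B-\frac c2\Delta_{\mathcal U})/\delta\rfloor$, take the gadget of Lemma~\ref{lem:merge}(a) with
\[
\zeta_{u,\ell}=c\,\bigl(g u-\tfrac12\lvert u\rvert^2g^2\one\bigr)/g^2+\bigl(s_{\mathrm{lo}}+\tfrac c2\max_{\mathcal U}\lvert u\rvert^2+\ell\delta\bigr)\one .
\]
This is a shifted paraboloid: $R(v)=\ip{v}{\zeta_{u,\ell}}$ is a level $\theta_{u,\ell}\in I$ at $v=u$ plus $c(\ip{v}{u}-\lvert u\rvert^2)$ terms. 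Because $u\in\frac1g\Z^K$, one has $\ip{v}{u}\le\lvert u\rvert^2-\frac1{g^2}$ for lattice points $v\ne u$. Hence at query $u$ the type $(u,\ell)$ scores exactly $\theta_{u,\ell}$, while any other type $(u',\ell')$ with $u'\ne u$ scores at least $\delta$ below its own base level after recentering. I would fix the additive constants so that scores lie in $I$ and are $\ge s_{\mathrm{lo}}\ge0$; this is where $B\ge\frac c2\Delta_{\mathcal U}$ is used. I would also check the gadget conditions $\min\zeta\ge-\frac12$ and range $\le1$, which use $c\le1$ and $u\in\Delta_K$. The intent is that the sublevel count $\#\{o:R_o(u)\le\theta_{u,\ell}\}$, minus that at $\theta_{u,\ell}-\delta$, isolates the multiplicity of type $(u,\ell)$ up to types strictly dominated in a fixed order.

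\emph{Injectivity.} Suppose $A\ne A'$ have the same size and $\Phi(A)=\Phi(A')$. Order the types lexicographically and take the first type $(u,\ell)$ where the multiplicities differ, say $A$ has more. Apply Lemma~\ref{lem:merge}(b) at $w=u$ with $s=\theta_{u,\ell}$ and $\epsilon<\epsilon''<\delta$. Then $n_s$ counts $A$'s rows up to and including type $(u,\ell)$, while $n'_{s+\epsilon''}$ counts $A'$'s rows with score below the next level, namely the same set of types. The strict separation $\delta>\epsilon''$ between consecutive score levels at $u$ guarantees this. Since the earlier multiplicities agree, $n_s>n'_s$, which contradicts the lemma. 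Hence $\Phi$ is injective on multisets of size $n$ over $T_{\mathcal U}$ types, giving (a) by stars and bars.

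\emph{(b) and (c).} For (b), integer additivity makes $\Phi(A)=\sum_t m_t\varphi(o_t)$. Injectivity on all equal-size multisets forces $\{\varphi(o_t)-\varphi(o_1)\}$ to be linearly independent over $\mathbb Q$, since a dependency can be cleared to integer coefficients and split into two equal-size multisets. Thus $D\ge T_{\mathcal U}-1$. For (c), on $\Delta_F$ the lattice $N_g\cap\Delta_F$ has $\binom{g+d}{d}$ points and $\Delta_{\mathcal U}\le1$. The choice $g=\lceil\epsilon^{-1/2}\rceil-1$ gives $\delta=1/g^2>\epsilon$ and $g\ge1$ when $\epsilon<\frac14$, and $B-\frac12\ge0$. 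Plugging in gives $\kappa+1\ge\lfloor(B-\frac12)g^2\rfloor+1$, and the asymptotic form follows from $g\asymp\epsilon^{-1/2}$.

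\emph{Main obstacle.} I expect the hard step to be the ordering argument in the injectivity proof: choosing $\zeta$ so that, at every query $u$, the types are totally preordered by score consistently with a single global order, with gaps of at least $\delta$. I would first try ordering by $\ell$ (level) and then by $u$, using the paraboloid drop $c/g^2$ to separate same-level types at different $u$. If a single order fails, I would fall back to a Möbius/inclusion–exclusion recovery of the multiplicities from the fuzzy counts across all $(u,\ell)$.
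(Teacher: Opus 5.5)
\textbf{Verdict: genuine gap in (a).} Your outline is the paper's: types indexed by a lattice point and a level, Lemma~\ref{lem:merge}(b) to force injectivity, stars and bars for (a), an integer kernel vector split as $z^+,z^-$ for (b), and the lattice count for (c). Your (b) and (c) are fine. The gap is in the construction that (a) rests on.

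\textbf{The key inequality is false.} You use $\ip{v}{u}\le|u|^2-\frac1{g^2}$ for lattice points $v\ne u$. For $g=3$, $u=(\frac13,\frac23)$, $v=(0,1)$ one has $\ip{v}{u}=\frac23>\frac59=|u|^2$; in general $\ip{v-u}{u}$ has no sign. (Your displayed $\zeta_{u,\ell}$ also has linear term $\frac cg u$, not the $cu$ you then use.) What the lattice gives is symmetric and exact. The vector $z=g(u-v)\in\Z^K$ has $\sum_kz_k=0$, so $\sum_kz_k^2\equiv\sum_kz_k\equiv0\pmod 2$. Hence $\frac c2|u-v|^2=q\delta$ with $q\in\Z_{\ge0}$, and $q=0$ iff $u=v$.

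\textbf{A one-sided gap would not suffice.} Lemma~\ref{lem:merge}(b) compares $n_s$ for $A$ with the count of $A'$ at $s+\epsilon''$. So you need that no type scores in $(s,s+\epsilon'']$ at the query point, i.e.\ all score--threshold differences must be exact.

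\textbf{The orientation is reversed.} Read your linear term as $+cu$, a tangent plane of the convex $\frac c2|w|^2$. Then $R_{(u',\ell)}(u)=\bar s+\ell\delta+\frac c2|u|^2-\frac c2|u-u'|^2\le\theta_{u,\ell}$. Every same-level type is therefore counted at the test $(u,\ell)$, the same-level block of the type--test incidence matrix is all ones, and $(u,\ell)$ is not isolated.

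\textbf{The paper's construction.} It uses tangent planes of the concave $-\frac c2|w|^2$ and query-dependent thresholds. Types are $\zeta_{(u,j)}=(\bar s+\frac c2|u|^2+j\delta)\one-cu$ with $\bar s=s_{\mathrm{lo}}+\frac c2\max_{\mathcal U}|u|^2$. Tests $(u',j')$ have threshold $s_{(u',j')}=\bar s-\frac c2|u'|^2+j'\delta$. Since $\sum_ku'_k=1$,
\[
R_{(u,j)}(u')-s_{(u',j')}=\tfrac c2|u-u'|^2+(j-j')\delta=(q+j-j')\delta\in\delta\Z,
\]
with $q=q(u,u')$. This difference is positive whenever $j>j'$, or $j=j'$ and $u\ne u'$. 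The remaining hypotheses then check out:
\begin{itemize}
\item scores at test points are at least $\bar s-\frac c2|u'|^2\ge s_{\mathrm{lo}}$;
\item thresholds are at most $s_{\mathrm{lo}}+\frac c2\Delta_{\mathcal U}+\kappa\delta\le s_{\mathrm{lo}}+B$;
\item $s_{\mathrm{lo}}\ge0$ gives $\min_k\zeta_k\ge\bar s-\frac c2\ge-\frac12$.
\end{itemize}

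\textbf{Fixing the injectivity step.} Even with this construction, your lexicographic initial-segment claim fails. A lower-level type $(u,j)$ is counted at test $(u',j')$ iff $q(u,u')\le j'-j$, which depends on $u$. What does hold is triangularity by level with identity diagonal blocks. Your first-difference argument works if you order by level only:
\begin{itemize}
\item take the lowest level $j$ at which the multiplicities differ, and a $u$ with more copies of $(u,j)$ in $A$;
\item at test $(u,j)$ the two counts differ by exactly that surplus, because lower levels agree and all other types contribute nothing.
\end{itemize}
The paper states the same fact as $\det\Pi=1$.
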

\emph{Intuition.} Each row type is the tangent plane of $-\frac c2|w|^2$ at a lattice point $u$, lifted by $j\delta$; each test compares a score at a lattice point with a threshold. On the lattice, score minus threshold is a multiple of $\delta>\epsilon$, so Lemma~\ref{lem:merge} applies to every test, and the type--test incidence matrix is unimodular. The state must therefore separate all multisets of types (a), a linear integer state needs $T_{\mathcal U}-1$ coordinates (b), and counting lattice points on a face gives (c).
\fullproof{A.7}{\ref{app:lb}}

The argument combines an $\epsilon$-capacity count~\cite{KolmogorovTikhomirov1959} with the rank argument for linear
states; Lemma~\ref{lem:merge} and the lattice-exact robustness of the tangent family are what make it apply to
conformal shift states. For one $w$, a one-point net histogram with grid spacing $\epsilon$ is
$\epsilon$-tight with $\lceil B/\epsilon\rceil+2$ coordinates (Theorem~\ref{thm:state}(i)), and Theorem~\ref{thm:tq}(c)
shows that this grid size is necessary for the class.

\begin{lemma}[Support-preserving rounding]\label{lem:round}
Let $w\in\Delta_K$ have support of size $p\ge2$ and $g\ge1$. The largest-remainder rounding $u\in N_g$ of $w$ has
$\mathrm{supp}\,u\subseteq\mathrm{supp}\,w$ and $\norm{w-u}\le r_p(g):=2\lfloor p/2\rfloor\lceil p/2\rceil/(pg)\le p/(2g)$;
$r_p(g)$ is nondecreasing in $p$.
\end{lemma}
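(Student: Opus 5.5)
We take $u$ to be the usual largest-remainder rounding, show that it lies in $N_g$ with $\mathrm{supp}\,u\subseteq\mathrm{supp}\,w$, and bound $g\norm{w-u}$ by an extremal argument over the fractional parts. Let $P=\mathrm{supp}\,w$, so $|P|=p$. For each $k$ write $gw_k=\lfloor gw_k\rfloor+f_k$ with $f_k\in[0,1)$; then $f_k=0$ for $k\notin P$. Since $\sum_k gw_k=g$ is an integer, $r:=\sum_{k\in P}f_k$ is an integer. Moreover $0\le r<p$, and $r\le p-1$ because each $f_k<1$. Choose a set $T\subseteq P$ of $r$ indices carrying the $r$ largest $f_k$ over $P$, breaking ties arbitrarily. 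If $r>0$, every chosen $f_k$ is at least the average $r/p>0$, so no index outside $P$ is needed. Set $u_k=(\lfloor gw_k\rfloor+\one\{k\in T\})/g$.

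Then $gu$ is an integer vector with sum $\sum_k\lfloor gw_k\rfloor+r=g$, so $u\in N_g$. Off $P$ we have $\lfloor gw_k\rfloor=0$ and $k\notin T$, hence $u_k=0$; this gives $\mathrm{supp}\,u\subseteq P$. Entries of $P$ may be rounded down to $0$, but that only shrinks the support.

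For the error, let $S_{\uparrow}=\sum_{k\in T}f_k$ and $S_{\downarrow}=\sum_{k\in P\setminus T}f_k=r-S_{\uparrow}$. Then
\[
g\norm{w-u}=\sum_{k\in T}(1-f_k)+S_{\downarrow}=r-S_{\uparrow}+S_{\downarrow}=2S_{\downarrow}.
\]
This is the key identity, and the main step is to bound $S_{\downarrow}$ using only that the rounded-down remainders are no larger than the rounded-up ones. Every $f_k$ with $k\in P\setminus T$ is at most $\min_{T}f\le S_{\uparrow}/r$ when $r\ge1$. Hence
\[
S_{\downarrow}\le\frac{(p-r)(r-S_{\downarrow})}{r},
\]
which rearranges to $S_{\downarrow}\le r(p-r)/p$. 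The case $r=0$ gives $S_{\downarrow}=0$ directly. So $\norm{w-u}\le 2r(p-r)/(pg)$. Over integers $r$, the product $r(p-r)$ is maximised at $r=\lfloor p/2\rfloor$, which yields $r_p(g)$. Since $\lfloor p/2\rfloor\lceil p/2\rceil\le p^2/4$, we also get $r_p(g)\le p/(2g)$.

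Monotonicity is a direct check on $gr_p(g)$, which equals $m$ for $p=2m$ and $2m(m+1)/(2m+1)$ for $p=2m+1$. The chain
\[
m\le\frac{2m(m+1)}{2m+1}\le m+1
\]
holds, since the first inequality reduces to $2m^2+m\le 2m^2+2m$. Hence $r_p(g)$ is nondecreasing in $p$. The only delicate point is the choice of $T$ inside $P$ under ties. It is settled by the average argument in the first paragraph, so I expect no real obstacle beyond the extremal bound on $S_{\downarrow}$.
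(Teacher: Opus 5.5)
Your proof follows the paper's: the same rounding, the same identity $g\norm{w-u}=2S_{\downarrow}$ (the paper writes it as $2(r-\sum_T f_k)$), and the same averaging bound. Your inequality $S_{\downarrow}\le(p-r)S_{\uparrow}/r$ is the paper's ``the mean of the $r$ largest remainders is at least $r/p$'' rearranged. The parity check for monotonicity is also the paper's; its second inequality, which you do not verify, reduces to $2m\le 2m+1$.

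One step is false, and it is the one you call the only delicate point. It is not true that every chosen $f_k$ is at least $r/p$; only the largest one is. For $g=2$ and $w=(0.45,0.3,0.25)$ the remainders are $(0.9,0.6,0.5)$ and $r=2$, so the second chosen remainder $0.6$ lies below $r/p=2/3$.

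The conclusion you want still holds, by the paper's argument. When $r\ge1$, any $r$ numbers in $[0,1)$ sum to less than $r$, so at least $r+1$ of the $f_k$ with $k\in P$ are positive. Hence the $r$ largest remainders over all $K$ coordinates are positive, and under any tie-breaking they never include the zero remainders off $P$.

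Because you define $T\subseteq P$ outright, your support claim needs only $r\le p-1<p$. Positivity matters only to identify your $u$ with the largest-remainder rounding taken over all coordinates, which is the object the lemma names. Replace that sentence with the counting argument above and the proof is complete.
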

\emph{Intuition.} Round $gw$ down and give the $r\le p-1$ leftover units to the largest remainders~\cite{ContHeidari2015}; only coordinates in the support change, and the $\ell_1$ error, at most $2r(p-r)/(pg)$, is largest at $r=\lfloor p/2\rfloor$.
\fullproof{A.8}{\ref{app:round}}

\begin{theorem}[Net histograms: optimal in exponent within their class]\label{thm:tq}
Consider the state of Definition~\ref{def:state} with unit weights, any net $N$ and grid $G$, and
$I=[s_{\mathrm{lo}},s_{\mathrm{lo}}+B]$ with $s_{\mathrm{lo}}>-\frac12$.
(a) For every $w_0\in\Delta_K$ ($K\ge2$) and $s_0\in I$ with $s_0\ge\frac12$ there is a multiset $A$ with
$s^*_A(v)=s_0+\frac12\norm{v-w_0}$ for all $v$, on which $\shat(w_0)-s^*_A(w_0)\ge\mathrm{dist}(w_0,N):=\min_m d_m(w_0)$.
(b) For every $x\in I$ there is a multiset with $s^*\equiv x$ and $\shat(w)-s^*(w)=\gamma_G(x)+\frac12\mathrm{dist}(w,N)$.
(c) If the state is $\epsilon$-tight on $(W,I)$ and $s_{\mathrm{lo}}+B\ge\frac12$, then $\mathrm{dist}(w,N)\le\epsilon$ on $W$,
$\gamma_G\le\epsilon$ on $I$, and $G$ has at least $\lceil B/\epsilon\rceil$ points. If $W$ contains a family $\mathcal F$ of
distinct $d$-faces, then $|N|\ge\max\bigl(1,|\mathcal F|(1-2(d+1)\epsilon)^d\bigr)(2\epsilon)^{-d}$ (the second term when
$2(d+1)\epsilon<1$), and the state has at least $|N|(\lceil B/\epsilon\rceil+1)+1$ coordinates.
(d) Let $W$ be a union of faces $\Delta_{F'}$ with $|F'|\le p$ ($p\ge2$), $N=N_g\cap W$, $r=r_p(g)$, $h>0$, and
$G=\{s_{\mathrm{lo}}-\frac r2+ih:0\le i<J\}$ with $J=\lceil(B+r)/h\rceil+1$. Then for every multiset of blocks and
every $w\in W$ with $s^*(w)\in I$, $s^*(w)\le\shat(w)\le s^*(w)+r+h$, while Corollary~\ref{cor:exch} and
Theorem~\ref{thm:state}(ii) still hold. On one $d$-face, $g=\lceil(d+1)/\epsilon\rceil$ and $h=\frac\epsilon2$ give
$\epsilon$-tightness with $O_d(B\epsilon^{-(1+d)})$ coordinates for $\epsilon\le\min(1,B)$, so, when
$s_{\mathrm{lo}}+B\ge\frac12$, the class needs and achieves $\Theta_d(B\epsilon^{-(1+d)})$ coordinates, with constants that
depend on $d$.
\end{theorem}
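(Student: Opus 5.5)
I will prove the four parts in order, building (a) and (b) from the gadget rows of Lemma~\ref{lem:merge}(a). These are singleton blocks with $j(1)=1$, so every block score equals its row score.

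For (a), the plan is to realise a cone-shaped shift. Take $\zeta=s_0\one+\frac12(\one-2e)$, choosing the sign vector so that $\ip{v}{\zeta}$ reproduces $s_0+\frac12\norm{v-w_0}$ at least from below. Since $\norm{v-w_0}=\max_{\eta\in\{\pm1\}^K}\ip{v-w_0}{\eta}$, one linear gadget $v\mapsto s_0+\frac12\ip{v-w_0}{\eta}$ per sign pattern $\eta$ gives the whole cone as a maximum. The conditions of Lemma~\ref{lem:merge}(a) are that the spread is at most 1 and the minimum is at least $-\frac12$. Both hold because the spread of $\frac12\eta$ is 1 and $s_0\ge\frac12$ keeps every entry $\ge-\frac12$, after adding the constant $-\frac12\ip{w_0}{\eta}\in[-\frac12,\frac12]$.

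A single row gives a linear score, not a maximum. I therefore use the row score itself, $R_o(v)=\max(\ip{v}{a_o},\ip{v}{b_o})$, which is already a maximum of two linear maps. Writing $\norm{v-w_0}$ as a maximum over pairs of sign patterns is awkward. The first thing I would try instead is the multiset route. Fill the multiset with $k-1$ rows of score $-\frac12$ (the case $r=-\frac12$) and a large number of high rows. Then add many copies of each linear gadget $\eta$ so that the $k$-th smallest score at $v$ equals $\min_\eta$ over the gadgets. This gives a minimum, not a maximum, so I would replace the construction with the set of rows $\{o_\eta\}$ and rank $k=n$ (all rows). Then $s^*_A(v)=\max_\eta$, the cone. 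The $2^K$ rows are obtained by choosing $n$ so that $k(n)=n$, padding with rows of score $-\frac12$ where necessary.

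Given the cone, the state thresholds at the net points satisfy $t_m\ge s^*(w_m)=s_0+\frac12 d_m(w_0)$. Hence $\shat(w_0)\ge\min_m(s_0+d_m(w_0))=s^*(w_0)+\mathrm{dist}(w_0,N)$.

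For (b), I take rows with $R\equiv x$ (all copies of $o_{x\one}$, legal since $x>-\frac12$). Then $t_m=x+\gamma_G(x)$ by Lemma~\ref{lem:wc}(ii), and the query formula gives the stated excess directly.

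For (c), tightness applied to (a) with $s_0\in I$, $s_0\ge\frac12$ (possible because $s_{\mathrm{lo}}+B\ge\frac12$) gives $\mathrm{dist}\le\epsilon$. Tightness applied to (b) gives $\gamma_G\le\epsilon$ on $I$, and covering $I$ of length $B$ with gaps of at most $\epsilon$ forces at least $\lceil B/\epsilon\rceil$ grid points. The bound on $|N|$ is a covering argument. The $\ell_1$ balls of radius $\epsilon$ around net points must cover each $d$-face. Comparing $d$-volumes gives $(2\epsilon)^{-d}$ up to the face normalisation. Distinct faces share only lower-dimensional boundaries, so I shrink each face by $2(d+1)\epsilon$ toward its barycentre so that balls meeting the shrunken interiors are essentially face-private, which yields the factor $(1-2(d+1)\epsilon)^d$.

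I expect this volume comparison, in particular the exact constant for $\ell_1$ balls intersected with the affine hull, to be the main obstacle. I would compute the $\ell_1$ ball volume in the $d$-dimensional hull explicitly, bounding it from above by $(2\epsilon)^d$ times the face volume normalisation.

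For (d), Lemma~\ref{lem:round} gives $d_m\le r$ for the rounded net point, and Theorem~\ref{thm:state}(i) gives excess at most $r+\gamma_G(r^*_m)$. Lemma~\ref{lem:lip} places $r^*_m\in[s_{\mathrm{lo}}-\frac r2,\,s_{\mathrm{lo}}+B+\frac r2]$, where the grid has gap at most $h$, so the excess is at most $r+h$. Coverage and the algebraic properties do not depend on the choice of net and grid. Plugging in $g$ and $h$ gives $r\le\frac{d+1}{2g}\le\frac\epsilon2$, $|N|=O_d(\epsilon^{-d})$ and $J=O(B/\epsilon)$, which together with (c) shows the class needs and achieves $\Theta_d(B\epsilon^{-(1+d)})$ coordinates.
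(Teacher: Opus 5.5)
\textbf{Genuine gap in (a).} The problem is the step that turns your $2^K$ linear gadgets into a multiset whose shift is their maximum. Your final construction (gadgets plus rows with $R\equiv-\frac12$, rank $k=n$) needs some $n\ge2^K$ with $k(n)=n$. But the rank is fixed by $|A|$: $k(n)=\lceil(1-\alpha)(n+1)\rceil$ equals $n$ only for $(1-\alpha)/\alpha\le n<(2-\alpha)/\alpha$. At $\alpha=0.1$ this forces $n\le18$, so for $K\ge5$ no such $n$ exists. Hence neither (a) nor the conclusion $\mathrm{dist}(w,N)\le\epsilon$ of (c) is proved for general $K$ and $\alpha$.

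The repair is the padding you tried and dropped, with the counts corrected:
take $n$ with $2^K\le k(n)\le n$, which is possible since $k(n)\to\infty$ and $n-k(n)\ge\alpha n-(2-\alpha)$;
add $k(n)-2^K$ rows with $R\equiv-\frac12$;
add $n-k(n)$ rows with $R\equiv H$ for a fixed $H>s_0+1$.
As $\norm{v-w_0}\le2$ on $\Delta_K$, the cone value $x^\circ=s_0+\frac12\norm{v-w_0}$ lies in $[s_0,s_0+1]$. So at every $v$ exactly $k(n)$ scores (all gadgets and all low rows) are at most $x^\circ$, and the $k(n)$-th smallest is the largest of them, namely $x^\circ$. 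With $k-1$ low rows, as in your first try, the gadgets occupy positions $k,k+1,\dots$ and you get their minimum; they must end at position $k$ instead. This corrected version is the paper's construction. The paper uses only the $2^K-2$ non-constant sign patterns, which suffice because $v-w_0$ sums to zero. Your two constant patterns give $R\equiv s_0$ and are harmless.

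\textbf{Part (c).} The rest follows the paper's route, but two steps of (c) should be made exact.

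First, net points need not lie on the face, so project rather than intersect with the affine hull. With $F=\{f_0,\dots,f_d\}$ and $\pi(w)=(w_{f_1},\dots,w_{f_d})$, every $u\in\Delta_K$ satisfies $\norm{w-u}\ge\sum_{i\le d}|w_{f_i}-u_{f_i}|$. So the image under $\pi$ of $\{w\in\Delta_F:\norm{w-u}\le\epsilon\}$ lies in an $\ell_1$ ball of volume $(2\epsilon)^d/d!$, while $\pi(\Delta_F)$ has volume $1/d!$.

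Second, the face-private sets are exactly disjoint, not just essentially. Take $D_F=2\epsilon\one_F+(1-2(d+1)\epsilon)\Delta_F$, which is your shrink toward the barycentre. Suppose a net point $u$ were within $\epsilon$ of some $w\in D_F$ and of some $w'\in D_{F'}$, for distinct faces of equal size. Then for $k\in F\setminus F'$ and $k'\in F'\setminus F$ we get $2\epsilon\ge\norm{u-w}+\norm{u-w'}\ge w_k+w'_{k'}\ge4\epsilon$, a contradiction.
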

\emph{Intuition.} (a) builds rows whose exact shift is a cone of slope $\frac12$ around $w_0$, so at $w_0$ the state exceeds the exact shift by at least the distance from $w_0$ to the net; in (b) a constant score pays the grid gap. Tightness thus forces an $\epsilon$-covering net, which a volume count on each face bounds below, and grid gaps of at most $\epsilon$ (c); the lattice net with the rounding of Lemma~\ref{lem:round} and a uniform grid achieve $r+h\le\epsilon$ (d).
\fullproof{A.9}{\ref{app:tq}}

\emph{Net sizes and constants.} On the union $W_p$ of all faces with at most $p$ members,
$|N_g\cap W_p|=\sum_{j\le p}\binom Kj\binom{g-1}{j-1}$: a lattice point with support exactly $F$, $|F|=j$, is a
composition of $g$ into $j$ positive parts. At $K=12$ and $g=4$ this gives 210 points for pairs and 870 for triples,
against $\binom{15}{11}=1{,}365$ on $\Delta_{12}$. At $B=4$ and $\epsilon=1/40$ on one $d$-face, the upper count
$\min_g|N_g\cap\Delta_F|(J+1)+1$ of (d), with $h=\epsilon-r_{d+1}(g)$, exceeds the lower count
$(2\epsilon)^{-d}(\lceil B/\epsilon\rceil+1)+1$ of (c) by factors of 8.1, 25, 107 and 1,073 for $d=1,2,3,5$. For general
states the exponents are $1+d/2$ (Theorem~\ref{thm:lb}) against $1+d$ (Theorem~\ref{thm:tq}); whether a mergeable state
that is not a net histogram closes this gap of $d/2$ is open.

\emph{Consequences at $K=12$.} Take $\epsilon=0.1$ on the observed shift window $[0.08,1.75]$ ($B=1.67$). Theorem~\ref{thm:lb}(b)
with $g=3$, $c$ slightly above $0.9$ (so that $\delta=c/9>\epsilon$) and $\mathcal U$ the 132 points of $N_3$ with support
size two ($\Delta_{\mathcal U}=0$, $\kappa=16$) gives $T_{\mathcal U}-1=2{,}243$ integer coordinates for pairs; adding the
220 points with support size three ($\Delta_{\mathcal U}=\frac29$, $\kappa=15$) gives 5,631 for triples, which also bounds
$\Delta_{12}$ since it contains them. Theorem~\ref{thm:tq}(c), with $\lceil B/\epsilon\rceil=17$ and the 66 pair faces
($|N|\ge198$), the 220 triple faces ($|N|\ge880$) or $\Delta_{12}$ itself ($d=11$), gives 3,565, 15,841 and 878,906,251
coordinates for the net-histogram class (bottom of Table~\ref{tab:v2k12}). Full support is therefore out of reach for this class,
while a deterministic dominating integer-additive state of 22,524 bytes (5,631 four-byte coordinates) on $\Delta_{12}$
is not excluded; restricted families of pairs or triples are what the class can serve.

\section{Experiments}\label{sec:exp}

The experiments ask whether the state pays off in an end-to-end configuration search and what it costs in decision
quality (Section~\ref{sec:exp-search}), also on two real panels of about $10^8$ rows (Section~\ref{sec:exp-scale}); when, at equal bytes, it is the narrowest reliable summary for random, realistic,
declared and neighbourhood weights (Section~\ref{sec:exp-bytes}); what build, merge and query cost for one panel and for
many groups (Section~\ref{sec:exp-cost}); how it behaves forward in time (Section~\ref{sec:exp-time}); and what it
achieves with twelve members and under deletion (Section~\ref{sec:exp-k12}). Table~\ref{tab:deploy}, at the end of
this section, condenses the findings into when to use the state and when not.

\subsection{Experimental Setup}

\begin{table}[!t]
\centering
\scriptsize
\caption{Panels and member subsets.}
\label{tab:v2datasets}
\setlength{\tabcolsep}{3pt}
\begin{tabularx}{\columnwidth}{@{}Lrlrrrr@{}}
\toprule
Panel & Series & Block & Rows/ & Blocks & $n_{\mathrm{cal}}$ & Test \\
 & & & block & & & steps \\
\midrule
M4 Hourly & 414 & series$\times$origin & 48 & 2,070 & 1,463 & 4 \\
M4 Daily & 4,227 & series$\times$origin & 14 & 21,135 & 18,622 & 4 \\
M4 Weekly & 359 & series$\times$origin & 13 & 1,795 & 1,216 & 4 \\
EPF & 120 & market$\times$day & 24 & 3,640 & 3,100 & 609 \\
DE & 24 & day & 24 & 728 & 300 & 609 \\
\bottomrule
\end{tabularx}
\\[2pt]
\parbox{\columnwidth}{\scriptsize Subsets $K{=}4$: ets, theta, arima, lgbm\_q / m3\_arx, m5\_ewma, m9\_lear, m10\_lgbm\_q; $K{=}6$ adds snaive, ridge / m1\_sn\_day, m11\_lgbm\_cp.}
\end{table}

\emph{Panels and members.} Table~\ref{tab:v2datasets} lists the panels: M4 Hourly, Daily and
Weekly~\cite{Makridakis2020M4}, the five day-ahead markets of the price forecasting benchmark (EPF,
$5\times24$ delivery-hour series)~\cite{Lago2021} and the German day-ahead panel (DE)~\cite{Marcjasz2023}. Every
panel has a pool of twelve members (statistical models, LightGBM quantile regression, ridge and climatology) with
19 quantile levels; the main experiments use the subsets $K=4$ and $K=6$ of Table~\ref{tab:v2datasets}, and Section~\ref{sec:exp-k12} uses
all twelve. The pools contain no neural or foundation models. The central pair is $(0.1,0.9)$, $\alpha=0.1$, and the
block rule covers all rows of a block. No member uses an outcome at or after its forecast origin: M4 members,
including the global LightGBM models, are refit at each of five rolling origins per series, and EPF and DE members use
only days before the delivery day. A block is all rows of one forecast origin of one series (M4) or of one delivery
day of one market (EPF, DE); $n_{\mathrm{cal}}$ counts calibration blocks per random split, and the test steps are the
origins or days of the forward-in-time runs (Section~\ref{sec:exp-time}).

\emph{\calqa settings and baselines.} Nets are $N_6$ for $K=4$ ($|N|=84$) and $N_4$ for $K=6$ ($|N|=126$) with $J=64$
grid points at quantiles of the net scores of a pilot (10\% of the blocks for M4, the first 28 days for EPF and DE),
which only places the grid and is excluded from calibration and test (B0). Baselines: (a) bottom-$s$ block samples by
hash with split conformal, stored as raw float32 rows, compressed integer rows, score-sufficient rows $(a_o,b_o)$
rounded up to float16 (valid, since the scores only grow) and float16 with zstd, as in approximate query
processing~\cite{Agarwal2013BlinkDB,Chaudhuri2017AQP}; (b) a net-score sample that keeps the $|N|$ block
scores of each sampled block; (c) training-conditional (PAC) versions of (a), (b) and exact with the binomial rank of
Vovk~\cite{Vovk2012}; (d) exact per-$w$ conformal from all calibration rows, in SQL, as a NumPy loop and vectorised
from float16 or float32 rows; (e) stored per-configuration counts. In equal-bytes comparisons, $(g,J)$ and the sample
encodings were selected per budget on other random partitions of the same non-pilot blocks, symmetrically for all
methods. Timings run on a 2-core virtual machine with DuckDB 1.5.6 (2 threads) and NumPy, as medians of five warm
repeats (interquartile range at most 0.70 of the median in the many-group runs). Budgets are in KB of 1,024 bytes.

\emph{Metrics and implementation check.} Coverage is block coverage (all rows of the test block inside the interval);
the width ratio is the mean normalised width relative to exact per-$w$ conformal with the same data and weights.
Random block splits make the blocks exchangeable (Corollary~\ref{cor:exch}), so their coverages check the code only: over 84 panel $\times$ $K$ $\times$ $n$ cells the net
state covers 0.908--0.974, exact per-$w$ conformal 0.887--0.910, and no shift falls below the exact one.

\subsection{End-to-End Configuration Search}\label{sec:exp-search}

\begin{table}[!t]
\centering
\scriptsize
\caption{End-to-end configuration search: 604 groups in 34 families, 5 splits.}
\label{tab:v2search}
\setlength{\tabcolsep}{2.6pt}
\begin{tabularx}{\columnwidth}{@{}Lrrrrrrr@{}}
\toprule
 & Test & Regret & Time & Speed- & & \multicolumn{2}{c}{Large groups} \\
\cmidrule(lr){7-8}
Driver & bW & (\%) & (s) & up & KB & (\%) & (s) \\
\midrule
Exact (float16) & 1.0000 & 0.67 & 42.9 & 1 & 21.0 & 0.51 & 34.35 \\
Lipschitz-pruned & 1.0000 & 0.67 & 41.6 & 0.5--3.1 & 21.0 & 0.51 & 31.65 \\
State only & 1.0041 & 3.22 & 3.6 & 1.2--163 & 26.7 & 4.63 & 0.48 \\
Two-stage $k{=}10$ & 1.0004 & 1.75 & 4.3 & 1.3--137 & 53.0 & 2.38 & 0.69 \\
Two-stage $k{=}100$ & 0.9995 & 1.05 & 4.8 & 1.3--59 & 53.0 & 1.15 & 1.04 \\
Sample $1\times$ & 1.0012 & 1.29 & 12.1 & 0.8--65 & 42.0 & 3.79 & 3.16 \\
Sample $2\times$ & 1.0000 & 0.87 & 15.0 & 0.8--56 & 42.0 & 1.57 & 6.20 \\
Sample $4\times$ & 0.9999 & 0.72 & 19.2 & 0.8--39 & 42.0 & 0.81 & 10.28 \\
Subsample 10\% & 1.0064 & 3.18 & 3.1 & 7.5--73 & 23.1 & 3.76 & 2.92 \\
\midrule
Candidates & State & $k{=}10$ & $k{=}100$ & Exact & \multicolumn{3}{r}{DuckDB speed-up} \\
\midrule
Random, $10^2$ & 6.66 & 3.71 & -- & 3.26 & \multicolumn{3}{r}{3.3--75} \\
Random, $10^3$ & 4.28 & 2.41 & 1.63 & 1.49 & \multicolumn{3}{r}{11--626} \\
Random, $10^4$ & 3.22 & 1.75 & 1.05 & 0.67 & \multicolumn{3}{r}{13--2,301} \\
Subsets, $2^K{-}1$ & 3.30 & 3.02 & -- & 2.32 & \multicolumn{3}{r}{1.4--47} \\
Local search & 3.24 & 2.26 & -- & 0.41 & \multicolumn{3}{r}{1.1--17} \\
\bottomrule
\end{tabularx}
\end{table}

\emph{Protocol.} For each panel and $K$ we form families of groups (M4: all series, category, category $\times$
forecast origin, category $\times$ chunks of 20 series; EPF: all, market, market $\times$ quarter; DE: all, quarter,
two months; the DE quarters are the groups of the running example), 604 groups in 34 families. Within every cell, the blocks outside the realistic-weights validation split
are split 75/25 into calibration and test (5 splits); groups keep at least 19 calibration and 5 test blocks. Each
group searches for the smallest calibrated width $f(w)$ (Section~\ref{sec:method}) with the same strategies for every
driver: $10^2$, $10^3$ or $10^4$ Dirichlet(1) candidates, all $2^K-1$ equal-weight subsets, or a greedy coordinate
search from equal weights. Every pick is evaluated by exact float64 conformal. Its \emph{in-sample} regret is its exact
calibrated width over the best exact width that any exact driver found, minus one. \emph{Out of sample}, each pick is
deployed with its exact shift on the held-out blocks, and we report its block-Winkler score (normalised width plus
$2/\alpha$ times the block score's excess over the shift) over that of the exact-search pick. Table~\ref{tab:v2search} reports this ratio as a mean over groups and the regret
as a pooled mean, times summed over families (builds excluded), speed-ups as float16 exact time over the driver's
(range over families) and the median bytes read per group; its bottom part gives the in-sample regret per candidate
set and the exact-SQL time over the state's SQL screen. Two
competitors use the same float16 re-check of the top $k$ as the two-stage plan but screen otherwise: a bottom-$s$ sample
of float16 score-sufficient blocks at 1, 2 or 4 times the state's bytes, or a uniform 10\% of the calibration blocks.
A third is exact search with Lipschitz pruning: the net points are evaluated exactly as anchors $w_a$, and a candidate
is skipped when its lower bound $\ip{w}{\bar\beta}+2\max_a(\qhat(w_a)-L\norm{w-w_a})$, with the constant $L\le\frac12$ of
Lemma~\ref{lem:lip} computed from the rows, reaches the best exact width so far; it returns the exact pick in every
group.

\emph{Two-stage plan against samples and pruning.} At $10^4$ candidates (Table~\ref{tab:v2search}, top) re-checking
the top 100 matches exact search out of sample: the held-out block-Winkler score of its pick is 99.95\% of that of
the exact-search pick, against 100.41\% for the state alone. In sample its regret is 1.05\%, against 0.67\% for exact
search and 3.22\% for the state. It takes 4.8~s over all families against 42.9~s, and it is faster than float16 exact
search in all 34 families, by 1.3--59 times. No screen with the same re-check is both faster and better. The sample
at the state's bytes reaches 1.29\% in 12.1~s, and in the 114 large groups, whose blocks exceed the $1\times$ sample, 3.79\% in
3.16~s against 1.15\% in 1.04~s (elsewhere it holds every block and is exact search from rows). Larger samples buy
regret with time: at 2 and 4 times the bytes they reach 0.87\% and 0.72\% at three to four times the plan's time, and
in the large groups only the $4\times$ sample is better (0.81\% in 10.28~s). The 10\% subsample is faster (3.1~s) but
loses 3.18\%. On these panels the plan is thus Pareto-optimal among these drivers, not dominant (on both real panels of $10^8$ rows it beats them in regret at less total time,
Section~\ref{sec:exp-scale}), and its gain is time, not memory: it
reads the state and the float16 rows, 54,268 bytes per group (median) against 21,516 for the rows alone. Lipschitz
pruning cannot compete on time: it evaluates 4.9--101\% of the candidates per family and takes 0.97 of the exact
time, because a candidate is skipped only when $f(w)$ minus a slack of up to $2L\norm{w-w_a}$ reaches the best width,
and in the large groups that dominate the time this slack exceeds almost every candidate's excess over the best.
Every driver of Table~\ref{tab:v2search} is 6.7--7.4\% above the test oracle (the candidate chosen with the test
labels): selecting among $10^4$ widths on the calibration blocks costs more than any difference between drivers.

\emph{Scale and SQL.} The largest group is all 755 cells of M4 Daily at $K=4$, with 140,994 calibration rows
(Figure~\ref{fig:dataflow}). There the state screens $10^4$ candidates in 42~ms and the two-stage plan finishes in
118~ms, against 6.9~s for float16 exact search; the build is paid once. On the
synthetic $\times40$ replica of M4 Daily (Section~\ref{sec:exp-cost}; cost only), searching $10^4$ candidates in its six category groups takes
the two-stage plan 6.0~s against 650~s for exact float16 search, 108 times less, but a $4\times$ sample screen with the
same re-check takes 6.8~s: at 10,652,320 rows the re-check of 100 candidates dominates both, while the state alone
screens in 49~ms. In DuckDB the state screens $10^4$ candidates 13--2,301 times faster than exact SQL
(Table~\ref{tab:v2search}, bottom; 65~ms against 148.9~s in the largest group), and its picks equal the NumPy picks in
every group (exact SQL at $10^4$ candidates ran in 30 of the 34 families).

\emph{Where exact search is enough.} For the $2^K-1$ subsets float16 recomputation is faster than the state in 5 of 34
families, and greedy local search from float16 rows is cheap and reaches 0.41\% (the state's: 3.24\%).

\emph{Coverage of searched picks.} Deployed with their exact shift, the picks of exact and two-stage searches over
$10^4$ candidates cover 0.84--0.90 of the held-out blocks (mean per family), below the target 0.9, since the minimum
over $10^4$ calibrated widths is optimistic on its own blocks. With the state's shift, the state's own picks cover
0.88--0.92 at a median width 1.01--1.10 times the exact one; Corollary~\ref{cor:sel}(i) covers this deployment when blocks are i.i.d., but with $|N|=84$ its slack is
0.022 for the largest group and 0.26--0.51 for the groups of 20--72 calibration blocks that make up most families, so
there the observed coverage comes from the net inflation, not from the corollary. A search that needs the nominal
level should recalibrate its pick on held-out blocks; the rank $k'$ of Corollary~\ref{cor:sel}(iii), with the state's
shift and i.i.d.\ blocks, is finite only for $n\ge406$ ($|N|=84$) or $n\ge427$ ($|N|=126$).

\subsection{At Scale: Smart-Meter and Retail Panels of \texorpdfstring{$10^8$}{10\textasciicircum 8} Rows}\label{sec:exp-scale}

\begin{table*}[!t]
\centering
\scriptsize
\caption{Configuration search at scale: CER and M5, $10^4$ candidates per group unless stated.}
\label{tab:v2scale}
\setlength{\tabcolsep}{3.2pt}
\begin{tabularx}{\textwidth}{@{}lRRRRRRRRRRRR@{}}
\toprule
 & \multicolumn{3}{c}{CER, $K=4$} & \multicolumn{3}{c}{CER, $K=6$} & \multicolumn{3}{c}{M5, $K=4$} & \multicolumn{3}{c}{M5, $K=6$} \\
\cmidrule(lr){2-4}\cmidrule(lr){5-7}\cmidrule(lr){8-10}\cmidrule(l){11-13}
Driver & Regret & bW & Speed-up & Regret & bW & Speed-up & Regret & bW & Speed-up & Regret & bW & Speed-up \\
\midrule
State only & 1.18 & 1.0033 & 2,007 & 1.77 & 1.0045 & 1,406 & 0.94 & 1.0004 & 1,464 & 1.58 & 1.0019 & 952 \\
Two-stage $k{=}10$ & 0.65 & 1.0012 & 415 & 1.02 & 1.0012 & 388 & 0.50 & 1.0001 & 360 & 0.79 & 0.9998 & 304 \\
Two-stage $k{=}100$ & 0.31 & 1.0001 & 54 & 0.65 & 0.9999 & 51 & 0.25 & 1.0000 & 52 & 0.50 & 0.9998 & 44 \\
Exact (float16) & 0.08 & 1.0000 & 1 & 0.49 & 1.0000 & 1 & 0.03 & 1.0000 & 1 & 0.28 & 1.0000 & 1 \\
Sample $1\times$ & 1.88 & 1.0054 & 37 & 1.94 & 1.0052 & 37 & 1.33 & 1.0013 & 36 & 1.71 & 1.0030 & 33 \\
Sample $4\times$ & 1.43 & 1.0036 & 31 & 1.57 & 1.0035 & 31 & 0.98 & 1.0006 & 29 & 1.16 & 1.0015 & 27 \\
Subsample 10\% & 1.41 & 1.0031 & 6.3 & 1.54 & 1.0031 & 6.8 & 0.98 & 1.0007 & 4.6 & 1.16 & 1.0015 & 4.4 \\
\midrule
\multicolumn{13}{@{}l}{\emph{Greedy forward selection (Caruana, $T{=}25$)}} \\
Two-stage $k{=}10$ & 0.70 & 1.0008 & 8.9 & 0.66 & 0.9996 & 11 & 0.54 & 1.0002 & 9.0 & 0.63 & 0.9995 & 11 \\
Exact (float16) & 0.14 & 1.0000 & 1 & 0.08 & 1.0000 & 1 & 0.15 & 1.0000 & 1 & 0.09 & 1.0000 & 1 \\
\bottomrule
\end{tabularx}
\end{table*}

\emph{Data and protocol.} The CER smart-metering trial~\cite{CER2012} records the half-hourly electricity use of
Irish homes and businesses. We keep the 5,670 of 6,435 meters with at most 1\% missing half-hours between 2009-07-14
and 2010-12-31; the 0.047\% missing cells are filled with the previous day's value for the members' inputs only, and
blocks whose outcome day has a filled cell are dropped. A block is one meter's forecast of the 48 half-hours of a
day from all earlier days, on each of 389 forecast days: 2,198,767 blocks and 105,540,816 rows. The members are those
of the EPF and DE pools without exogenous inputs; the global LightGBM members are fitted per chunk of about 89 meters,
refit every 91 days on at most the preceding 364 days and their scale floored at 0.01~kWh. The pilot is the first 28 forecast days; cells are (meter
type, 20-meter chunk, month), and the six families (all meters, type, month, type $\times$ month, chunk, chunk
$\times$ quarter) have 1,762 groups. M5~\cite{Makridakis2022M5} holds the daily unit sales of 30,490 Walmart
item-store series, 68\% of whose daily values are zero (Nixtla mirror; research use only). A block is one series' 28-day forecast at one of 104 weekly origins, from at
most the preceding 364 days: 3,170,960 blocks and 88,786,880 rows. The members are those of the M4 pool, with the
global LightGBM members fitted per chunk of 305 series, refit every four origins and their scale floored at one unit.
The pilot is the first four origins; cells are (store, department, 20-series chunk, quarter), and the ten families (all,
state, store, category, department, state $\times$ category, store $\times$ category, store $\times$ department, store
$\times$ department $\times$ quarter or $\times$ chunk) have 2,243 groups, with 2,286,770 calibration blocks in the
largest and a median of 1,500 in the 1,550 chunk groups. Nets, grid, drivers and the metrics of Table~\ref{tab:v2scale} are as in Section~\ref{sec:exp-search}
(regret and bW pooled over groups and splits, speed-ups as ratios of times summed over families), plus greedy forward selection with replacement~\cite{Caruana2004}
(25 steps), whose state scores of every step are re-checked for the top 10. Each family runs as one job with 16 threads on a node
of the DTU HPC cluster (Xeon E5-2650 v4, Gold 6226R or Gold 6342), the same for every driver; times are medians of
five warm repeats of the whole family, and speed-ups compare drivers within one job. The CER data are licensed by the
Irish Social Science Data Archive and are not redistributed; our code and result summaries are.

\emph{Results.} At $10^4$ candidates (Table~\ref{tab:v2scale}), re-checking the top 100 picks configurations whose
held-out block-Winkler score is within 0.02\% of the exact-search pick's on both panels and for both $K$, 44--54 times
faster in total than vectorised float16 exact search (46.4~s against 2,526~s over the six CER families and 79.1~s
against 4,081~s over the ten M5 families at $K=4$).
Unlike on the small panels, the plan beats every sample screen with the same re-check in regret in all four settings,
at less total time: at $K=4$ the sample at the state's bytes reaches 1.88\% (CER) and 1.33\% (M5) in 68.9~s and 112.3~s,
the $4\times$ sample 1.43\% and 0.98\% in 80.7~s and 138.6~s, against 0.31\% and 0.25\% for the plan. With 28--195
blocks per group, their top 100 can miss the best candidates. In the coarse families their time is close to the plan's (9.6 against 9.8~s and 8.1 against 8.2~s in
the largest groups); in the many-group families it is up to seven times longer. The 10\% subsample (1.41\% and 0.98\%) takes 401.7~s and 887.1~s. The state alone is 952--2,007 times faster than exact search, at a regret of
0.94--1.77\%. In the largest groups at $K=4$ (73,442,640 and 64,029,560 calibration rows) the states of all 3,420 and
12,400 cells are built in 11.1~s and 10.9~s; the state then screens $10^4$ candidates in 49 and 163~ms, the two-stage
plan takes 9.8 and 8.2~s and exact search 480 and 377~s. In greedy forward selection, scoring each step with the state
and re-checking its top 10 costs 0.54--0.70\% regret at 8.9--11 times less time than exact steps. Exact search
stays the most accurate in sample (0.03--0.49\% at $10^4$ candidates). Searched picks again cover slightly less than
the target: deployed with their exact shift, the picks of exact and two-stage search cover 0.896--0.897 of the
held-out blocks (pooled over groups), and the state's picks with the state's shift 0.900--0.905.

\subsection{Equal-Bytes Regime Map and Nets}\label{sec:exp-bytes}

\begin{table}[!t]
\centering
\scriptsize
\caption{Equal-bytes regime map over the 10 panel$\times K$ settings.}
\label{tab:v2regime}
\setlength{\tabcolsep}{1.0pt}
\begin{tabularx}{\columnwidth}{@{}Rccllll@{}}
\toprule
Budget & Reliable & Mean width & Sample & Net state & Best reliable & Matched PAC \\
(KB) & leader & only & blocks & ratio & other ratio & sample ratio \\
\midrule
4 & 9/1/0 & 6/4/0 & 3--23 (4) & 1.080--1.256 & 1.133--1.359 & 1.65--50.63 (8$\times\infty$) \\
8 & 9/1/0 & 4/4/2 & 7--47 (1) & 1.061--1.193 & 1.117--1.314 & 1.19--2.56 (4$\times\infty$) \\
16 & 6/4/0 & 1/7/2 & 16--95 & 1.048--1.169 & 1.041--1.243 & 1.04--1.47 (1$\times\infty$) \\
32 & 1/6/3 & 0/4/6 & 32--192 & 1.038--1.155 & 1.016--1.589 & 1.01--1.22 \\
64 & 1/3/6 & 0/1/9 & 67--397 & 1.033--1.144 & 1.000--1.131 & 1.00--1.11 \\
128--1k & 0/0/10 & 0/0/10 & 135--6356 & 1.028--1.142 & 0.994--1.010 & 0.99--1.01 \\
\bottomrule
\end{tabularx}
\end{table}

Table~\ref{tab:v2regime} compares all methods at budgets from 4 to 1,024~KB over the 10 panel $\times$ $K$ settings,
with 40 random splits, 24 post-hoc $w$ and 400 test blocks each; a method is reliable if at most 5\% of its (split,
$w$) cells fall below 0.85 coverage, and PAC versions are matched to $(\epsilon',\delta')=(0.15,0.05)$. The table counts
the settings where the net state is narrowest strictly, tied within two standard errors, or beaten; it gives the blocks
held by the largest row sample (compressed float16), with, in parentheses, the settings where no row sample reaches the
9 blocks of a finite shift, and widths as ratios to exact per-$w$ conformal. At 4 and 8~KB the net state is the narrowest reliable summary in 9 of 10
settings and tied in the other: row samples hold only a few blocks there, their PAC versions are often infinite, and
the rival that sets the bar is the net-score sample. At 16~KB the state is best or tied everywhere, at 32--64~KB it
mostly ties or loses, and from 128~KB on it loses in all 10 settings, because its net inflation floors at 1.028--1.142
while samples approach exact. The ordering holds for thresholds 0.80 and 0.85 and shares of 1, 5 and 10\%.

\begin{table}[!t]
\centering
\scriptsize
\caption{Realistic, declared and neighbourhood weights (10 panel$\times K$ settings, 50 splits).}
\label{tab:v2weights}
\setlength{\tabcolsep}{2pt}
\begin{tabularx}{\columnwidth}{@{}Lcccccc@{}}
\toprule
 & \multicolumn{2}{c}{Lattice $N_g$} & \multicolumn{2}{c}{Declared points} & \multicolumn{2}{c}{Neighbourhood} \\
\cmidrule(lr){2-3}\cmidrule(lr){4-5}\cmidrule(lr){6-7}
Weights & Ratio & Wins & Ratio & Wins & Ratio & Wins \\
\midrule
Dirichlet draws & 1.085--1.268 & 2/10 & 1.085--1.256 & 1/10 & 1.073--1.217 & 0/10 \\
Stacking (CRPS) & 1.041--1.359 & 2/10 & 1.005--1.022 & 10/10 & 1.005--1.021 & 6/10 \\
Stacking (pinball) & 1.049--1.233 & 3/10 & 1.004--1.023 & 10/10 & 1.004--1.021 & 8/10 \\
Drifting refits & 1.083--1.322 & 2/10 & 1.020--1.164 & 2/10 & 1.019--1.113 & 3/10 \\
Top-two subsets & 1.004--1.025 & 10/10 & 1.004--1.025 & 9/10 & 1.004--1.021 & 6/10 \\
Top-three subsets & 1.005--1.238 & 5/10 & 1.005--1.038 & 10/10 & 1.005--1.038 & 8/10 \\
\midrule
Bytes, $K{=}4$ & \multicolumn{2}{c}{21,844} & \multicolumn{2}{c}{22,624} & \multicolumn{2}{c}{28,864--161,984} \\
Bytes, $K{=}6$ & \multicolumn{2}{c}{32,764} & \multicolumn{2}{c}{40,304} & \multicolumn{2}{c}{40,304--120,904} \\
\bottomrule
\end{tabularx}
\end{table}

\emph{Realistic, declared and neighbourhood weights.} Table~\ref{tab:v2weights} uses weights fitted on a split
disjoint from calibration and test (30\% of the series or days, so B2 holds) and compares the state with the float16
sample at its bytes over 50 random splits. Its ratios are the state's width over exact per-$w$ conformal, averaged over
splits and the weight vectors of a type (range over settings); wins count the settings where this mean is below the
sample's. The three nets are $N_g$; $N_g$ with all equal-weight subsets and both stacking weights as declared points;
and that net plus a finer lattice in an $\ell_1$ ball around the stacking weights, fixed on the validation blocks. Off the net, stacking weights and drifting refits behave like Dirichlet
draws, and the sample is narrower in 7 or more of the 10 settings. Weights on the net favour the state: the top-two
subsets, which are points of $N_g$, in 10 of 10 settings, and the stacking weights in 10 of 10 once they are declared
as net points, where only the grid gap remains. A neighbourhood net narrows the drifting refits, but the equal-bytes
sample grows with it and stays ahead in 7 of 10 settings; it also pulls
the state's local search towards itself (regret 7.93\% against 3.24\%). No net shift fell below the exact one in any of
these runs.

\subsection{Cost of Build, Merge and Query}\label{sec:exp-cost}

\begin{figure}[!t]
\centering
\includegraphics[width=\columnwidth]{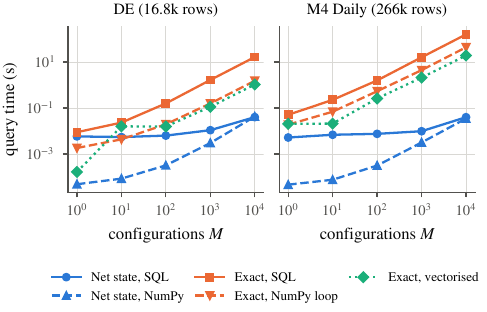}
\caption{Query time against the number $M$ of post-hoc configurations on the smallest (DE) and largest
(M4 Daily) panel, $K=4$, one state per panel: net state in SQL and NumPy, exact recomputation in SQL, NumPy loop and
vectorised NumPy.}
\label{fig:latency}
\end{figure}

\emph{One state per panel.} The build statement of Figure~\ref{fig:sql} produces the state at 93,655--170,943 rows
per second on the five panels and equals the NumPy state count for count; roll-up, complement and a 5\% append each
equal the state built from scratch. Figure~\ref{fig:latency} shows where the state pays off. The state's SQL query time
stays within a factor of two from one to $10^3$ configurations, while every exact method grows about linearly in $M$
from $10^2$ on. At $M=10^4$ the SQL query costs
3.89--6.82~$\mu$s per configuration, 413--4,039 times less than exact recomputation in SQL, and the NumPy query is
17.9--437.8 times faster than vectorised recomputation from score-sufficient rows; at $M=1$ engine overhead
dominates.

\begin{table}[!t]
\centering
\scriptsize
\caption{Many groups and panes ($K{=}4$ unless stated; medians of 5 runs).}
\label{tab:v2manygroup}
\setlength{\tabcolsep}{1.5pt}
\begin{tabularx}{\columnwidth}{@{}Lrrrrrrrr@{}}
\toprule
Panel & Cells & Sparse & Hybr. & Break- & f16, & f16, & SQL, & Coarse \\
 & & /f16 & /f16 & even & $M{=}1$ & $M{=}10^3$ & $M{=}10^3$ & ms \\
\midrule
M4 Hourly & 1,863 & 0.44 & 0.44 & 13/28 & 2.3--5.5 & 149--850 & 179--358 & 0.4/1.2 \\
M4 Daily & 19,022 & 1.48 & 1.00 & 43/98 & 0.80--1.5 & 136--384 & 238--453 & 1.3/2.5 \\
M4 Daily$^6$ & 19,022 & 1.49 & 1.00 & 45/98 & 0.53--1.0 & 95--187 & 185--386 & 1.8/3.6 \\
M4 Weekly & 1,616 & 1.59 & 1.00 & 47/105 & 0.61--1.6 & 27--115 & 60--105 & 0.5/1.8 \\
EPF & 125 & 0.41 & 0.41 & 25/57 & 0.56--6.2 & 5.7--985 & 3.3--471 & 1.5/5.6 \\
DE & 25 & 0.38 & 0.38 & 25/57 & 0.62--5.6 & 6.2--170 & 3.2--123 & 0.6/2.2 \\
M4 D.$\times$40 & 760,880 & 1.48 & 1.00 & 43/98 & 0.36--0.82 & 92--230 & 609 & 1.2/2.6 \\
\bottomrule
\end{tabularx}
\end{table}

\emph{Many groups and panes.} Table~\ref{tab:v2manygroup} keeps one state per (group, pane) cell and queries unions of
cells (roll-ups, series subsets, sliding windows of panes, all cells) with $M\in\{1,10,100,1000\}$ Dirichlet
configurations; cells hold one block on M4 and 30 on EPF and DE, M4 Daily$^6$ uses $K=6$, and M4 D.$\times$40 is
the replica described below (SQL timed once there). Sparse and hybrid are the bytes of the sparse state and of a
per-cell hybrid over float16 rows; break-even is the cell size from which the dense state is smaller than float32
or float16 rows; speed-ups are exact over state query time, NumPy against float16 rows and SQL against raw rows (range
over union families); coarse is the NumPy query time in ms of the second-level states described next at $M=1$ and $10^3$ (slowest family). The
advantage is query time for many configurations. At $M=1$ the state is slower than float16 rows
on most M4 Daily and Weekly unions, because a roll-up over one-block cells reads as many states as blocks; a second
cube level of (category or market $\times$ pane) states removes this (coarse column). The dense state is smaller than
the rows only from tens of blocks per cell. Baseline (e) answers none of the post-hoc queries. Every state, merge and
query equals its counterpart in the other engine.

\emph{Scale.} To pass $10^7$ rows we replicate the non-pilot M4 Daily blocks 40 times with perturbed member intervals
and outcomes and new series keys; the replicas are synthetic and serve for cost only. The table has 10,652,320 rows
in 760,880 cells and is built at 843,567 rows per second in NumPy and 156,996 in SQL. At $M=1000$ exact recomputation
from float16 rows takes 96,280~ms for the category roll-up, while coarse states take 2.6~ms in 655,320 bytes. On real
data, the 97,920,480 non-pilot CER rows in the 3,420 cells of Section~\ref{sec:exp-scale} ($K=4$, 64 cores of a shared
AMD EPYC 9554 node) are
built into states in 3.97~s by NumPy and 14.65~s by DuckDB with 64 threads (89.76~s and 945.65~s with one), equal
in both engines. Over all rows, $10^4$ configurations take 0.07~s from the NumPy state and 0.73~s in
SQL, against 163~s for vectorised float16 recomputation and 3,428~s in SQL. Sliding a six-month window (48,965,184
rows) by one month costs 6~ms by merge and retraction against 2.95~s to rebuild it in NumPy, and 0.45~s against
7.61~s in SQL.

\subsection{Forward in Time}\label{sec:exp-time}

\begin{table}[!t]
\centering
\scriptsize
\caption{Forward in time on EPF and DE (the 406 test steps after decay selection).}
\label{tab:v2temporalfwd}
\setlength{\tabcolsep}{1.3pt}
\begin{tabularx}{\columnwidth}{@{}Lcccccccccccc@{}}
\toprule
 & & \multicolumn{3}{c}{Net state, $\lambda{=}1$} & \multicolumn{2}{c}{Exact, $\lambda{=}1$} & \multicolumn{4}{c}{Net state, $\lambda{=}0.95$} & Raw VI \\
\cmidrule(lr){3-5}\cmidrule(lr){6-7}\cmidrule(lr){8-11}\cmidrule(lr){12-12}
Panel & $K$ & Cov. & win. & Wink. & Cov. & win. & Cov. & win. & Wink. & ESS/$\Omega$ & blk/row \\
\midrule
EPF & 4 & 0.922 & 0.026 & 2.73 & 0.906 & 0.077 & 0.924 & 0.010 & 2.78 & 195/95 & 0.334/0.813 \\
EPF & 6 & 0.931 & 0.021 & 2.63 & 0.905 & 0.078 & 0.936 & 0.007 & 2.70 & 195/95 & 0.333/0.819 \\
DE & 4 & 0.908 & 0.162 & 2.38 & 0.885 & 0.227 & 0.948 & 0.059 & 3.18 & 39/19 & 0.296/0.818 \\
DE & 6 & 0.934 & 0.103 & 2.20 & 0.898 & 0.204 & 0.953 & 0.038 & 2.75 & 39/19 & 0.300/0.823 \\
\bottomrule
\end{tabularx}
\end{table}

Table~\ref{tab:v2temporalfwd} evaluates EPF and DE strictly forward in time: each delivery day $d\ge119$ is calibrated
on days $[28,d)$, with the state updated by the decay recursion or by adding and subtracting panes, over 33--35
post-hoc $w$. The decay selected on the first 203 steps is $\lambda=1$ in all four settings. On the remaining 406 steps the net state covers 0.908--0.934 of the test blocks and
exact per-$w$ conformal 0.885--0.906; no theorem certifies these numbers (Theorem~\ref{thm:cov} is a robustness bound).
The table reports block coverage (Cov.), the share of (week, $w$) cells below 0.8 (win.), the per-row Winkler score
in units of $\sigma_o$ (Wink.), the mean effective sample size $(\sum\omega_i)^2/\sum\omega_i^2$ and total weight
$\Omega=\sum\omega_i$, and the block and row coverage of the uncalibrated $(0.1,0.9)$ interval (Raw VI). Decay with $\lambda=0.95$ lowers the share of (week, $w$) cells below 0.8 coverage from 0.021--0.162 to 0.007--0.059,
but at an effective sample size of 39--195 blocks. In 1,333,344 checks no decayed or unweighted net shift fell below the exact weighted shift; the decayed state
runs in floating point, so this is an empirical check for it, while the integer pane state is exact.

\subsection{Twelve Members and Deletion}\label{sec:exp-k12}

\begin{table}[!t]
\centering
\scriptsize
\caption{Twelve members: net state on restricted families (top) and size bounds at $\epsilon=0.1$ (bottom).}
\label{tab:v2k12}
\setlength{\tabcolsep}{2.6pt}
\begin{tabularx}{\columnwidth}{@{}Lrrrlll@{}}
\toprule
Family & $g$ & $|N|$ & KB & Coverage & Width ratio & Sample ratio \\
\midrule
pairs & 4 & 210 & 53.3 & 0.921--0.932 & 1.071--1.102 & 1.023--1.161 \\
pairs & 12 & 738 & 187.4 & 0.909--0.915 & 1.027--1.045 & 1.011--1.030 \\
triples & 4 & 870 & 220.9 & 0.931--0.949 & 1.120--1.197 & 1.005--1.093 \\
$\Delta_{12}$ & 2 & 78 & 19.8 & 0.972--0.993 & 1.653--2.364 & 1.432--1.497 (3$\times\infty$) \\
$\Delta_{12}$ & 4 & 1,365 & 346.6 & 0.962--0.987 & 1.385--1.774 & 0.990--1.061 \\
\midrule
 & \multicolumn{2}{r}{General LB} & \multicolumn{2}{r}{Net-state LB} & \multicolumn{2}{l}{Net-state UB} \\
 pairs & \multicolumn{2}{r}{2,243} & \multicolumn{2}{r}{3,565} & \multicolumn{2}{l}{46,495} \\
 triples & \multicolumn{2}{r}{5,631} & \multicolumn{2}{r}{15,841} & \multicolumn{2}{l}{2,126,461} \\
 $\Delta_{12}$ & \multicolumn{2}{r}{5,631} & \multicolumn{2}{r}{878,906,251} & \multicolumn{2}{l}{$1.31\cdot 10^{15}$} \\
\bottomrule
\end{tabularx}
\end{table}

Table~\ref{tab:v2k12} runs the state on the twelve-member pools with nets on unions of faces: a family holds all
faces with at most $p$ members ($\Delta_{12}$: full support), and ratios are medians over $w$ of the width over exact
per-$w$ conformal, ranged over the five panels, on random block splits ($20\times24$ $w$); the sample ratio is that of
the block sample at equal bytes. Its size bounds are for
tightness $\epsilon=0.1$ on the observed shift window $[0.08,1.75]$, the upper one $\min_g|N_g\cap W_p|(J+1)+1$
with $h=\epsilon-r_p(g)$ in Theorem~\ref{thm:tq}(d). On pairs the width ratio
falls from 1.071--1.102 ($g=4$, 53.3~KB) to 1.027--1.045 ($g=12$); on full support it stays at 1.385--1.774 even at
346.6~KB, and worst-case tightness 0.1 needs 878,906,251 coordinates in this class (Theorem~\ref{thm:tq}(c)). At
equal bytes the block sample is narrower in 22 of the 25 restricted panel $\times$ family cells.

\begin{figure}[!t]
\centering
\includegraphics[width=\columnwidth]{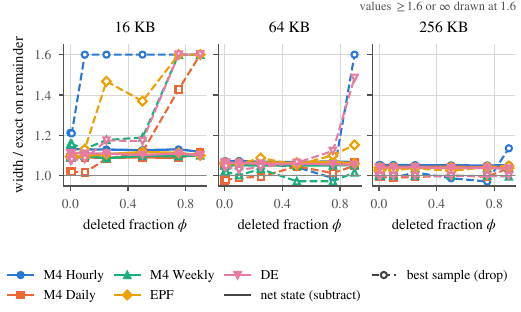}
\caption{Deleting a fraction $\phi$ of entities from a built summary at 16, 64 and 256~KB: net state after
subtraction (solid) and best pre-built block sample after dropping deleted blocks (dashed); width over exact on the
remainder, $K=4$.}
\label{fig:deletion}
\end{figure}

Figure~\ref{fig:deletion} deletes a random fraction $\phi$ of entities after the summary is built (10 splits).
Subtraction gives exactly the state of the remainder, and its width ratio moves by at most 0.027 over $\phi$ ($K=4$).
A pre-built sample can only drop deleted blocks, since refilling needs a rescan. At 16~KB the best sample is wider
than the state at every $\phi$ on three settings and from $\phi=0.25$ or $0.5$ on DE and M4 Daily; at 64 and 256~KB
mostly only for $\phi\ge0.5$. The $J=64$ pilot grid costs little (1.088--1.114 against 1.074--1.106 for a fine
grid in 672,340 bytes), and a relative-error sketch~\cite{Masson2019} per net point at its bytes is narrower in only 2 of
10 settings.

\begin{table}[!t]
\centering
\scriptsize
\caption{When to use the \calqa state, and when not (evidence in Sections~\ref{sec:exp-search}--\ref{sec:exp-k12}).}
\label{tab:deploy}
\setlength{\tabcolsep}{2pt}
\begin{tabularx}{\columnwidth}{@{}LLl@{}}
\toprule
Situation & Use & Sec. \\
\midrule
$\ge10^3$ candidates, many or large groups & state screen, re-check top $k$ & \ref{sec:exp-search}--\ref{sec:exp-scale} \\
Searches inside the engine (SQL) & state in SQL & \ref{sec:exp-search} \\
Subsets or greedy local search & exact from float16 rows & \ref{sec:exp-search} \\
Nominal coverage of a pick & held-out recalibration & \ref{sec:exp-search} \\
Small panels, time uncritical & $4\times$ sample, re-check & \ref{sec:exp-search} \\
Declared weights (stacking, subsets) & net points or stored counts (e) & \ref{sec:exp-bytes} \\
Off-net weights, $\ge32$~KB & block sample & \ref{sec:exp-bytes} \\
A few KB per group & state & \ref{sec:exp-bytes} \\
One candidate over one-block cells & float16 rows or coarse states & \ref{sec:exp-cost} \\
Deleted entities, sliding windows & state (subtract panes) & \ref{sec:exp-cost}, \ref{sec:exp-k12} \\
Full support at $K=12$ & sample, or restrict to pairs & \ref{sec:exp-k12} \\
\bottomrule
\end{tabularx}
\vspace{-5pt}
\end{table}

\section{Conclusion}\label{sec:concl}

\calqa turns conformal calibration of quantile-averaged ensembles into a mergeable, subtractable aggregate that scores
any post-hoc weight vector without reading a row, never below the exact shift and within a known excess of it. As a
screen before exact recomputation it matches exact search on held-out blocks at a fraction of the time, and on two
real panels of about $10^8$ rows it beats block-sample screens with the same re-check in regret at less total time; it saves time,
not memory (Table~\ref{tab:deploy}). Open are adaptive recalibration under drift, the exponent gap
between the general and net-histogram lower bounds, and randomized summaries.

\noindent\textbf{Availability.} \ifextended\else Full proofs: \extURL{}; \fi
code, SQL and results: \codeURL{}.

\ifextended
\appendices
\section{Proofs}\label{app:proofs}
\renewcommand{\thesubsection}{\thesection.\arabic{subsection}}
\renewcommand{\thesubsectiondis}{\thesection.\arabic{subsection}}

This appendix gives the complete proof of every result stated in Sections~\ref{sec:prelim}--\ref{sec:bounds}, in the order of the statements. Sections~\ref{app:merge} and~\ref{app:tq} use two order-statistic facts: (O1) $x_{(k)}\le s$ if and only if $\#\{i:x_i\le s\}\ge k$
(proof of Lemma~\ref{lem:lip}(iii)); (O2) if exactly $k$ values are $\le s$, then $x_{(k)}$ is the largest of them,
since they occupy the first $k$ sorted positions.

\subsection{Proof of Lemma~\ref{lem:lip} (Lipschitz transfer)}\label{app:lip}
\begin{IEEEproof}
(i) For reals, $\max(a,b)-\max(a',b')\le\max(a-a',b-b')$: if $\max(a,b)=a$ then $a-\max(a',b')\le a-a'$. Put
$\delta=w-w'$, $a=\ip{w}{\qa-y\one}$, $b=\ip{w}{y\one-\qb}$ and $a',b'$ at $w'$. Since $\sum_k\delta_k=0$ the
outcome cancels: $a-a'=\ip{\delta}{\qa}$ and $b-b'=-\ip{\delta}{\qb}$. For $x\in\R^K$ and
$c=(\max_k x_k+\min_k x_k)/2$ we have $\ip{\delta}{x}=\ip{\delta}{x-c\one}$ and $|x_k-c|\le(\max x-\min x)/2$,
so $|a-a'|,|b-b'|\le\norm{\delta}\rho_o/2$. Hence $R_o(w)-R_o(w')\le\norm{\delta}\rho_o/(2\sigma_o)\le\norm{\delta}/2$
by B3, and exchanging $w,w'$ gives (i). (ii) For $x\in\R^\ell$,
$x_{(j)}=\min_{|S|=j}\max_{i\in S}x_i$: the indices of the $j$ smallest entries attain $x_{(j)}$, and every
$j$-set contains an index with $x_i\ge x_{(j)}$ because at most $j-1$ entries are below $x_{(j)}$. So if
$\max_i|x_i-x'_i|\le\xi$, then $\max_{i\in S}x'_i\le\max_{i\in S}x_i+\xi$ for every $S$; minimising over $|S|=j$
gives $x'_{(j)}\le x_{(j)}+\xi$, and by symmetry $|x_{(j)}-x'_{(j)}|\le\xi$. Apply this to $x_i=R_{o_i}(w)$,
$x'_i=R_{o_i}(w')$ with (i). (iii) If $x_{(j)}\le s$, the $j$ indices of the smallest entries have $x_i\le s$; if
some $j$-set $S$ has $x_i\le s$ on $S$, the min-max form gives $x_{(j)}\le\max_S x_i\le s$. Hence $x_{(j)}\le s$ iff
$\#\{i:x_i\le s\}\ge j$. As $\sigma_o>0$, $R_o(w)\le s$ iff $\ell_o(w)-y\le s\sigma_o$ and $y-h_o(w)\le s\sigma_o$,
that is $y\in I_o(w,s)$ (for $s=\pm\infty$ by the conventions).
\end{IEEEproof}

\subsection{Proof of Lemma~\ref{lem:wc} (weighted conservative quantisation)}\label{app:wc}
\begin{IEEEproof}
(i) $C$ is nondecreasing, right-continuous and constant between consecutive distinct $v_i$, with $C=0$ below
$\min_i v_i$ and $C=\Omega$ from $\max_i v_i$ on. So $E=\{C\ge\need\}$ is an up-set that contains $\max_i v_i$ and
no $s<\min_i v_i$; its infimum is a jump point $v_i$, which lies in $E$ by right-continuity.
(ii) If $\need\le\Omega$, by (i) $\{s\in G:C(s)\ge\need\}=\{s\in G:s\ge r^*\}$, whose minimum is
$r^*+\gamma_G(r^*)$ if $r^*\le s_J$ and which is empty otherwise. If $\need>\Omega$, both sets are empty.
\end{IEEEproof}

\subsection{Proof of Theorem~\ref{thm:state} (state)}\label{app:state}
\begin{IEEEproof}
(i) \emph{Lower bound.} If $\shat(w)=\infty$ there is nothing to show. Otherwise take $m$ attaining the
minimum; then $t_m<\infty$ and $C_{w_m}(t_m)\ge\need$. Every block with $S_{U_i}(w_m)\le t_m$ has
$S_{U_i}(w)\le t_m+d_m(w)/2=\shat(w)$ by Lemma~\ref{lem:lip}(ii); as $\omega_i\ge0$,
$C_w(\shat(w))\ge C_{w_m}(t_m)\ge\need$, so $\need\le\Omega$ and Lemma~\ref{lem:wc}(i) gives
$\qhat(w)\le\shat(w)$. \emph{Upper bound.} If $\qhat(w)=\infty$ it is trivial. Otherwise
$C_w(\qhat(w))\ge\need$ by Lemma~\ref{lem:wc}(i), every block with $S_{U_i}(w)\le\qhat(w)$ has
$S_{U_i}(w_m)\le\qhat(w)+d_m(w)/2$, so $C_{w_m}(\qhat(w)+d_m(w)/2)\ge\need$ and
$r^*_m\le\qhat(w)+d_m(w)/2$. By Lemma~\ref{lem:wc}(ii), $t_m=r^*_m+\gamma_G(r^*_m)$, hence
$\shat(w)\le t_m+d_m(w)/2\le\qhat(w)+d_m(w)+\gamma_G(r^*_m)$ for every $m$.
(ii) $\Phi_\tau$ is a sum of per-block terms, and addition is associative and commutative; for $A'\subseteq A$,
$A=A'\uplus(A\setminus A')$. Decay follows from $\lambda^{\tau+t'-\tau_U}=\lambda^{t'}\lambda^{\tau-\tau_U}$ term by term,
and the recursion from splitting $A_{\le\tau}$ into $A_{<\tau}$ and the blocks stamped $\tau$, which enter
$\Phi_{\tau+1}$ with weight $\lambda$.
(iii) The bins $\beta_0,\dots,\beta_{j-1}$ partition $(-\infty,s_j]$, so
$\sum_{b<j}\Phi_\tau[m,b]=\sum_i\lambda^{\tau-\tau_i}\one\{S_{U_i}(w_m)\le s_j\}=C_{w_m}(s_j)$, and the last entry is
$\sum_i\lambda^{\tau-\tau_i}=\Omega$; as $t_m$ only evaluates $C_{w_m}$ on $G$, it is a function of $\Phi_\tau$. Also
$\tau_i<\tau$ and $\lambda\le1$ give $0<\omega_i\le1$. (iv) Each entry of $\Phi(A)$ counts blocks of $A$, and integer
arithmetic is exact. (v) By linearity $\Phi^\pi=\sum_r\pi_r\sum_{U\in P_r}\varphi(U)=\sum_i\omega_i\varphi(U_i)$, so, as
in (iii), $\sum_{b<j}\Phi^\pi[m,b]=C_{w_m}(s_j)$ and the last entry is $\Omega$; hence~\eqref{eq:query} computes the
$t_m$ of (i) with $\omega_{n+1}=1$. Sums and products of rationals are rational and computed exactly, and a pane
enters $\Phi^\pi$ only through the term $\pi_r\Phi(P_r)$.
\end{IEEEproof}

\subsection{Proof of Theorem~\ref{thm:cov} and Corollary~\ref{cor:exch} (coverage)}\label{app:cov}
\begin{IEEEproof}[Proof of Theorem~\ref{thm:cov} and Corollary~\ref{cor:exch}]
\emph{Step 1 (swap lemma~\cite{BarberCandes2023}).} For $v\in\R^{n+1}$ let
$\qhat(v_{1:n})=\min\{s:\sum_{j\le n}\mathrm{wt}_j\one\{v_j\le s\}\ge1-\alpha\}$ ($+\infty$ if the set is empty) and
$\mathcal E=\{v:v_{n+1}>\qhat(v_{1:n})\}$. $\mathcal E$ is Borel because, by Lemma~\ref{lem:wc}(i),
$\qhat(v_{1:n})=\min_{j\le n}\{v_j:\sum_{l\le n}\mathrm{wt}_l\one\{v_l\le v_j\}\ge1-\alpha\}$, a minimum of finitely many
Borel functions. We claim $\sum_{i=1}^{n+1}\mathrm{wt}_i\one\{v^{(i)}\in\mathcal E\}\le\alpha$. Let
$\bar F(s)=\sum_{j\le n+1}\mathrm{wt}_j\one\{v_j\le s\}$ and $q^\circ=\min\{s:\bar F(s)\ge1-\alpha\}$, finite by
Lemma~\ref{lem:wc}(i) since $\sum_j\mathrm{wt}_j=1$. For $i\le n$ the calibration part of $v^{(i)}$ has the
cumulative function $\bar F(s)-\mathrm{wt}_i\one\{v_i\le s\}-(\mathrm{wt}_{n+1}-\mathrm{wt}_i)\one\{v_{n+1}\le s\}\le\bar F(s)$ by
B1, and for $i=n+1$ it is $\bar F(s)-\mathrm{wt}_{n+1}\one\{v_{n+1}\le s\}\le\bar F(s)$. Hence
$\qhat(v^{(i)}_{1:n})\ge q^\circ$. The test coordinate of $v^{(i)}$ is $v_i$, so $v^{(i)}\in\mathcal E$ implies
$v_i>q^\circ$, and the sum is at most $\sum_i\mathrm{wt}_i\one\{v_i>q^\circ\}=1-\bar F(q^\circ)\le\alpha$.

\emph{Step 2 (domination).} Dividing by $\Omega+\omega_{n+1}$, $C_w(s)\ge\need$ if and only if
$\sum_{i\le n}\mathrm{wt}_i\one\{V^w_i\le s\}\ge1-\alpha$, so $\qhat(w)=\qhat(V^w_{1:n})$, and Theorem~\ref{thm:state}(i)
gives $\shat(w)\ge\qhat(w)$ pointwise. By Lemma~\ref{lem:lip}(iii) the coverage event is
$\{V^w_{n+1}\le\shat(w)\}$, so miscoverage implies $V^w\in\mathcal E$.

\emph{Step 3 (swap bound).} $V^w=f_w(Z)$, where $f_w$ applies $U\mapsto S_U(w)$ to each block; this map is
continuous on each piece $(\R^{2K+1})^\ell$ of the block space, hence Borel, and $f_w(Z^{(i)})=(V^w)^{(i)}$. Using
$\sum_i\mathrm{wt}_i=1$, the definition of $\dtv$ and Step~1 pointwise,
\begin{align*}
\P(V^w\in\mathcal E)&=\textstyle\sum_{i\le n+1}\mathrm{wt}_i\P(V^w\in\mathcal E)\\
&\le\textstyle\sum_{i\le n+1}\mathrm{wt}_i\P\bigl((V^w)^{(i)}\in\mathcal E\bigr)+G_w\\
&=\E\bigl[\textstyle\sum_{i\le n+1}\mathrm{wt}_i\one\{(V^w)^{(i)}\in\mathcal E\}\bigr]+G_w\\
&\le\alpha+G_w .
\end{align*}
If $w$ is random and independent of $Z$, apply this conditionally on $w$ and integrate, which gives $\E[G_w]$.

\emph{Step 4 (chain).} For Borel $f$, $\dtv(f(X),f(Y))=\sup_E|\P(X\in f^{-1}E)-\P(Y\in f^{-1}E)|\le\dtv(X,Y)$.
$\Sc$ is continuous on each piece because $\sigma_o$ is continuous and $\ge\sigma_0>0$, and $V^w_i=h_w(\Sc(U_i))$
with $h_w(((a_k,b_k))_{k\le\ell})$ the $j(\ell)$-th smallest of $\max(\ip{w}{a_k},\ip{w}{b_k})$, again continuous on each
piece. Maps applied block by block commute with the swap, so
$\dtv(V^w,(V^w)^{(i)})\le\dtv(\Sc(Z),\Sc(Z)^{(i)})\le\dtv(Z,Z^{(i)})$; multiply by $\mathrm{wt}_i$ and sum. For random $w$
the bound $G_w\le G_{\Sc}$ holds for every value of $w$, hence for $\E[G_w]$.

\emph{Step 5 (invariance).} $G_{\Sc}$ is defined without $w$, $N$, $G$ or $j$. For
$o'=(c+b\qa,c+b\qb,c+by)$ we have $\rho_{o'}=b\rho_o$ and $\mu_{o'}=b\mu_o$, so if $\sigma_0$ binds in neither row,
$\sigma_{o'}=b\sigma_o$ and $(a_{o'},b_{o'})=(a_o,b_o)$; thus $\Sc(Z)$ is unchanged as a random element. For the
corollary, exchangeability gives $Z^{(i)}\overset{d}{=}Z$, so every gap vanishes.
\end{IEEEproof}

\subsection{Proof of Corollary~\ref{cor:sel} (data-selected \texorpdfstring{$w$}{w})}\label{app:sel}
\begin{IEEEproof}
Let $F_m(s)=\P(S_U(w_m)\le s)$ for a fresh block $U$, $F_m(+\infty)=1$, $\hat F_m(s)=C_{w_m}(s)/n$ and
$Z_N=\max_m\sup_s(\hat F_m(s)-F_m(s))_+$. For any $w$ (possibly $\hat w$), let $\hat m$ be the smallest index
attaining $\shat(w)=\min_m(t_m+\frac12d_m(w))$; each $t_m$ is a measurable function of $U_{1:n}$, and $\hat m$ and
$t_{\hat m}$ at $w=\hat w$ are measurable functions of $(U_{1:n},\vartheta)$. If
$t_{\hat m}=+\infty$, coverage is sure. Otherwise $S_{U_{n+1}}(w_{\hat m})\le t_{\hat m}$ implies
$S_{U_{n+1}}(w)\le\shat(w)$ by Lemma~\ref{lem:lip}(ii), which is coverage by Lemma~\ref{lem:lip}(iii). As
$U_{n+1}$ is independent of $(U_{1:n},\vartheta)$ (the blocks are i.i.d.\ and $\vartheta$ is independent of $Z$), the
conditional coverage, i.e.\ $\mathrm{cov}(w;U_{1:n})$, the probability of coverage at $w$ over a fresh block $U_{n+1}$
with $U_{1:n}$ (and $\vartheta$) held fixed, is at least
$F_{\hat m}(t_{\hat m})\ge\hat F_{\hat m}(t_{\hat m})-Z_N\ge k/n-Z_N$, since $C_{w_{\hat m}}(t_{\hat m})\ge k$ for the rank
$k$ in use; for $k=k(n)$, $k/n\ge(1-\alpha)(n+1)/n\ge1-\alpha$. The Dvoretzky--Kiefer--Wolfowitz inequality with
Massart's constant~\cite{Massart1990}, valid for every distribution function (for general $F$ write $X=F^-(V)$
with $V$ uniform, so that $\one\{X\le s\}=\one\{V\le F(s)\}$ and $\sup_s|\hat F-F|\le\sup_v|\hat F_V(v)-v|$; since
$\hat F_m$ and $F_m$ are right-continuous, the supremum may be taken over a countable set, so $Z_N$ is measurable), and
a union bound over the net give
$\P(Z_N>x)\le2|N|e^{-2nx^2}$ for $x>0$. (i) As $x^2-\xi_0^2\ge(x-\xi_0)^2$ for $x\ge\xi_0$,
$\E Z_N=\int_0^\infty\P(Z_N>x)\,dx\le\xi_0+\int_{\xi_0}^\infty e^{-2n(x^2-\xi_0^2)}dx\le\xi_0+\int_0^\infty e^{-2nu^2}du
=\xi_0+\sqrt{\pi/(8n)}$; take expectations. (ii) The bound $1-\alpha-Z_N$ holds for every $w$ with the same $Z_N$,
and $\P(Z_N>\xi_{\delta'})\le\delta'$. (iii) On $\{Z_N\le\xi_{\delta'}\}$, $k'/n-Z_N\ge1-\epsilon'$. A larger rank only raises
every $t_m$, so $\shat$ still dominates the exact shift of rank $k(n)$, and Corollary~\ref{cor:exch} follows from
Step~2 of the previous proof.
\end{IEEEproof}

\subsection{Proof of Lemma~\ref{lem:merge} (gadgets and merge closure)}\label{app:merge}
\begin{IEEEproof}
(a) $\qa=\sigma_0\zeta\le\qb$, $\rho=\sigma_0(\max\zeta-\min\zeta)\le\sigma_0$ and $\mu=\sigma_0$, so $\sigma=\sigma_0$ and
$R(v)=\max(\ip{v}{\zeta},-\ip{v}{\zeta}-1)=\ip{v}{\zeta}$, because $\ip{v}{\zeta}\ge\min_k\zeta_k\ge-\frac12$.
(b) Suppose $n_s>n'_s$, so $n_s\ge1$; fix $H>s+\epsilon''$. Choose $N'\ge\max(n,n_s/(1-\alpha),(n-n_s+2)/\alpha)$ with $n=|A|$;
then $k':=k(N')\ge n_s$ and $N'-k'\ge\alpha N'-(2-\alpha)\ge n-n_s$. Let $A_0$ hold $k'-n_s$ rows with $R\equiv-\frac12$ and
$(N'-k')-(n-n_s)$ rows with $R\equiv H$ (part (a)); put $A_1=A\uplus A_0$, $A_2=A'\uplus A_0$, both of size $N'$.
Mergeability gives $\Phi(A_1)=\Phi(A)\oplus\Phi(A_0)=\Phi(A_2)$ (trivially if $A_0=\emptyset$), so both queries return
the same value $S$. In $A_1$ the values $\le s$ at $w$ are the $k'-n_s$ low values and the $n_s$ values of $A$
($s\ge s_{\mathrm{lo}}>-\frac12$ and $H>s$): exactly $k'$. By (O1) and (O2), $s^*_{A_1}(w)$ is the largest of them, an
$A$-value in $[s_{\mathrm{lo}},s]\subseteq I$. In $A_2$ only $k'-n_s+n'_s<k'$ values are $\le s+\epsilon''$, so
$s^*_{A_2}(w)>s+\epsilon''$ by (O1). Validity on $A_2$ gives $S>s+\epsilon''$, and tightness on $A_1$ gives
$S\le s+\epsilon<s+\epsilon''$, a contradiction.
\end{IEEEproof}

\subsection{Proof of Theorem~\ref{thm:lb} (general lower bound)}\label{app:lb}
\begin{IEEEproof}
Put $\kappa=\lfloor(B-\frac c2\Delta_{\mathcal U})/\delta\rfloor$ and $\bar s=s_{\mathrm{lo}}+\frac c2\max_{\mathcal U}|u|^2$. The
\emph{types} are $\chi=(u,j)$, $u\in\mathcal U$, $0\le j\le\kappa$, with rows $o_\chi=o_{\zeta_\chi}$ and
$\zeta_\chi=(\bar s+\frac c2|u|^2+j\delta)\one-cu$; the \emph{tests} are $\theta=(u',j')$ with query point
$u'\in\mathcal U$ and threshold $s_\theta=\bar s-\frac c2|u'|^2+j'\delta$. There are $T_{\mathcal U}$ of each.

\emph{Step 1 (legal rows, tangent identity).} $\max\zeta_\chi-\min\zeta_\chi=c(\max_ku_k-\min_ku_k)\le1$. With
$t=\max_ku_k$, $|u|^2\ge t^2$, so $\min_k\zeta_{\chi,k}\ge\bar s+\frac c2(t^2-2t)\ge\bar s-\frac c2\ge s_{\mathrm{lo}}-\frac12\ge-\frac12$.
By Lemma~\ref{lem:merge}(a) and $\sum_ku'_k=1$,
$R_\chi(u')=\bar s+\frac c2|u|^2+j\delta-c\ip{u}{u'}=\bar s-\frac c2|u'|^2+\frac c2|u-u'|^2+j\delta$, hence
$R_\chi(u')-s_\theta=\frac c2|u-u'|^2+(j-j')\delta$: the type is the tangent plane of $-\frac c2|w|^2$ at $u$, lifted
by $j\delta$.

\emph{Step 2 (lattice parity).} For $u,u'\in N_g$, $z=g(u-u')\in\Z^K$ has $\sum_kz_k=0$, and $z_k^2\equiv z_k \pmod 2$
gives $\sum_kz_k^2=2q$ with $q\in\Z_{\ge0}$, $q=0$ iff $u=u'$. So
$R_\chi(u')-s_\theta=(q(u,u')+j-j')\delta\in\delta\Z$. Fix $\epsilon''\in(\epsilon,\delta)$; no difference lies in
$(0,\epsilon'']$, so $R_\chi(u')\le s_\theta$ if and only if $R_\chi(u')\le s_\theta+\epsilon''$.

\emph{Step 3 (window).} $s_\theta\in[s_{\mathrm{lo}},s_{\mathrm{lo}}+\frac c2\Delta_{\mathcal U}+\kappa\delta]\subseteq I$, and
$R_\chi(u')\ge\bar s-\frac c2|u'|^2\ge s_{\mathrm{lo}}$ by the identity of Step~1, since $|u-u'|^2\ge0$ and $j\ge0$.

\emph{Step 4 (unimodular pattern).} Let $\Pi[\chi,\theta]=\one\{R_\chi(u_\theta)\le s_\theta\}$. By Step~2,
$\Pi[\chi,\theta]=0$ if $j_\chi>j_\theta$ (as $q\ge0$), and for $j_\chi=j_\theta$, $\Pi[\chi,\theta]=1$ iff $u_\chi=u_\theta$.
Ordered by $j$, $\Pi$ is block upper-triangular with identity diagonal blocks, so $\det\Pi=1$.

\emph{Step 5 (a).} A multiset over the types is $x\in\Z_{\ge0}^{T_{\mathcal U}}$, with counts $(\Pi^\top x)_\theta=\#\{o\in A_x:
R_o(u_\theta)\le s_\theta\}$. For $x\ne x'$ with $|x|=|x'|=n$, invertibility gives $\theta$ with, say,
$(\Pi^\top x)_\theta>(\Pi^\top x')_\theta$, and by Step~2 the latter equals $\#\{o\in A_{x'}:R_o(u_\theta)\le s_\theta+\epsilon''\}$.
Step~3 gives the hypotheses of Lemma~\ref{lem:merge}(b) at $w=u_\theta$, $s=s_\theta$, so $\Phi(A_x)=\Phi(A_{x'})$ would
contradict it. Hence $\Phi$ is injective on the $\binom{n+T_{\mathcal U}-1}{T_{\mathcal U}-1}$ multisets of size $n$.

\emph{Step 6 (b).} Let $\Gamma\in\Z^{D\times T_{\mathcal U}}$ hold the increments of the types. If
$D+1<T_{\mathcal U}$, the integer matrix $[\Gamma;\one^\top]$ has a nonzero rational kernel vector; clearing
denominators gives a nonzero integer kernel vector $z$, and $x=z^+$ and $x'=z^-$ are distinct, nonzero, of equal size,
with $\Gamma x=\Gamma x'$, so $\Phi(A_x)=\Phi(A_{x'})$, contradicting (a). So $D\ge T_{\mathcal U}-1$.

\emph{Step 7 (c).} Take $c=1$ and $\mathcal U=N_g\cap\Delta_F$, so $|\mathcal U|=\binom{g+d}{d}$ (compositions of
$g$ into $d+1$ nonnegative parts); $g\ge1$ and $g<\epsilon^{-1/2}$, so $\delta=g^{-2}>\epsilon$. On $\Delta_F$,
$\frac1{d+1}\le|u|^2\le1$, so $\Delta_{\mathcal U}<1\le2B$ and $\kappa\ge\lfloor(B-\frac12)g^2\rfloor$. With
$\lfloor x\rfloor+1\ge(x+1)/2$, $\binom{g+d}{d}\ge g^d/d!$ and $g\ge\epsilon^{-1/2}-1\ge\frac12\epsilon^{-1/2}$ (as
$\epsilon\le\frac14$) the $\Omega_d$ form follows.
\end{IEEEproof}

\subsection{Proof of Lemma~\ref{lem:round} (support-preserving rounding)}\label{app:round}
\begin{IEEEproof}
Let $x=gw$, $f=\lfloor x\rfloor$, $\phi=x-f\in[0,1)^K$ ($\phi=0$ off the support $F$) and $r=g-\sum_kf_k=\sum_{k\in F}\phi_k$,
an integer with $0\le r\le p-1$. If $r=0$, $u=w$. Otherwise let $S$ index the $r$ largest $\phi_k$; they are
positive, since $r$ or fewer positive entries below one would sum to less than $r$. So $u=(f+\one_S)/g\in N_g$ sums to
one and $\mathrm{supp}\,u\subseteq F$ (largest-remainder rounding is studied in~\cite{ContHeidari2015}). Its error is
$g\norm{w-u}=\sum_{k\in S}(1-\phi_k)+\sum_{k\notin S}\phi_k=2(r-\sum_S\phi_k)$; the mean of the $r$ largest of
the $p$ values is at least their mean $r/p$, so $\sum_S\phi_k\ge r^2/p$ and $\norm{w-u}\le2r(p-r)/(pg)$, maximised
over integers at $r=\lfloor p/2\rfloor$. For even $p$, $r_p(g)=p/(2g)$; for odd $p$, $(p^2-1)/(2pg)$; both increase
in $p$, and $\frac{p^2-1}{2p}\le\frac{p+1}2$ and $\frac p2\le\frac{p(p+2)}{2(p+1)}$ cover the parity changes.
\end{IEEEproof}

\subsection{Proof of Theorem~\ref{thm:tq} (net histograms)}\label{app:tq}
\begin{IEEEproof}
(a) For each of the $2^K-2$ non-constant $\varsigma\in\{\pm1\}^K$ let
$\zeta_\varsigma=\frac12\varsigma+(s_0-\frac12\ip{\varsigma}{w_0})\one$: its range is $1$ and its minimum is
$\ge-\frac12+s_0-\frac12\ge-\frac12$, so by Lemma~\ref{lem:merge}(a) $R_\varsigma(v)=s_0+\frac12\ip{\varsigma}{v-w_0}$. For
$x=v-w_0$ ($\sum x=0$), $\max_\varsigma\ip{\varsigma}{x}=\norm{x}$: if $x\ne0$ it has entries of both signs and its
sign pattern is non-constant. Take $n$ with $2^K-2\le k(n)\le n$; it exists because $k(n)\ge(1-\alpha)(n+1)\to\infty$
and $n-k(n)\ge\alpha n-(2-\alpha)\ge0$ for $n\ge(2-\alpha)/\alpha$. Let $A$ hold the sign rows, $k(n)-2^K+2$ rows
$R\equiv-\frac12$ and $n-k(n)$ rows $R\equiv H>s_0+1$. At $v$ the largest sign value $x^\circ=s_0+\frac12\norm{v-w_0}\ge\frac12$
has exactly $k(n)$ values of $A$ at or below it, so $s^*_A(v)=x^\circ$ by (O2). By Lemma~\ref{lem:wc}(ii),
$t_m\ge r^*_m=s_0+\frac12 d_m(w_0)$, so $\shat(w_0)\ge\min_m(s_0+d_m(w_0))=s^*_A(w_0)+\mathrm{dist}(w_0,N)$.
(b) Take $n$ copies ($k(n)\le n$) of a row with $R\equiv x$ (Lemma~\ref{lem:merge}(a), $x\ge s_{\mathrm{lo}}>-\frac12$). Then
$r^*_m=x$, $t_m=x+\gamma_G(x)$ by Lemma~\ref{lem:wc}(ii), and $\shat(w)=x+\gamma_G(x)+\frac12\mathrm{dist}(w,N)$.
(c) Part (a) with $s_0=\max(s_{\mathrm{lo}},\frac12)\in I$ gives $\mathrm{dist}(w,N)\le\epsilon$, and (b) gives $\gamma_G(x)\le\epsilon$
on $I$. Let $s'_1<s'_2<\dots$ be the grid points in $[s_{\mathrm{lo}},s_{\mathrm{lo}}+B+\epsilon]$; $s'_1\le s_{\mathrm{lo}}+\epsilon$
since $\gamma_G(s_{\mathrm{lo}})\le\epsilon$. If $s'_i<s_{\mathrm{lo}}+B$, then for every $x\in(s'_i,s_{\mathrm{lo}}+B]$ the grid has a point in
$[x,x+\epsilon]\subseteq[s_{\mathrm{lo}},s_{\mathrm{lo}}+B+\epsilon]$, so $s'_{i+1}$ exists and $s'_{i+1}\le x+\epsilon$; letting
$x\downarrow s'_i$ gives $s'_{i+1}\le s'_i+\epsilon$, hence $s'_i\le s_{\mathrm{lo}}+i\epsilon$ by induction. The first index $L$ with
$s'_L\ge s_{\mathrm{lo}}+B$ therefore has $L\ge\lceil B/\epsilon\rceil$. For a face $\Delta_F$, $F=\{f_0,\dots,f_d\}$, the map
$\pi(w)=(w_{f_1},\dots,w_{f_d})$ sends $\Delta_F$ onto $\{x\ge0,\sum x\le1\}$ of volume $1/d!$, and for every $u\in\Delta_K$,
$\norm{w-u}\ge\sum_{i\le d}|w_{f_i}-u_{f_i}|$, so the image of $\{w\in\Delta_F:\norm{w-u}\le\epsilon\}$ lies in an
$\ell_1$ ball of volume $(2\epsilon)^d/d!$. These images cover the simplex, so $|N|(2\epsilon)^d\ge1$. For $\Delta_F$, $F\in\mathcal F$, let
$D_F=\{w\in\Delta_F:w_k\ge2\epsilon\ \forall k\in F\}=2\epsilon\one_F+(1-2(d+1)\epsilon)\Delta_F$ and let $N_F$ hold the net
points within $\epsilon$ of some point of $D_F$. If $u\in N_F\cap N_{F'}$ for faces $F\ne F'$ of equal size, pick
$k\in F\setminus F'$, $k'\in F'\setminus F$, $w\in D_F$ and $w'\in D_{F'}$ within $\epsilon$ of $u$; then
$2\epsilon\ge\norm{u-w}+\norm{u-w'}\ge|w_k-u_k|+u_k+u_{k'}+|w'_{k'}-u_{k'}|\ge w_k+w'_{k'}\ge4\epsilon$, which is
impossible. So the $N_F$ are disjoint, and the volume argument on $\pi(D_F)$, of volume $(1-2(d+1)\epsilon)^d/d!$,
gives $|N_F|\ge(1-2(d+1)\epsilon)^d(2\epsilon)^{-d}$; summing over the disjoint $N_F$ gives the bound on $|N|$. The
state has $|N|(|G|+1)+1$ coordinates.
(d) Let $w\in W$, $\mathrm{supp}\,w\subseteq F'$, $|F'|\le p$, and $u$ its rounding (Lemma~\ref{lem:round}; $u=w$ at a vertex).
Then $u\in N$ and $\norm{w-u}\le r_p(g)=r$ by monotonicity. Every block score moves by at most $r/2$ between $w$ and $u$
(Lemma~\ref{lem:lip}(ii)), hence so does the $k$-th smallest (proof of Lemma~\ref{lem:lip}(ii)), and
$r^*_u\in[s_{\mathrm{lo}}-\frac r2,s_{\mathrm{lo}}+B+\frac r2]\subseteq[s_1,s_J]$ since $(J-1)h\ge B+r$; so
$\gamma_G(r^*_u)\le h$ and Theorem~\ref{thm:state}(i) gives the sandwich. With $g=\lceil(d+1)/\epsilon\rceil$,
$r\le\frac{d+1}{2g}\le\frac\epsilon2$ and $r+h\le\epsilon$; $|N|=\binom{g+d}{d}\le(2(d+1)/\epsilon)^d/d!$ and
$J\le2B/\epsilon+3\le5B/\epsilon$, which with (c) gives $\Theta_d(B\epsilon^{-(1+d)})$.
\end{IEEEproof}

\fi

\section*{Acknowledgment}
ChatGPT was used for polishing the language of this manuscript. The authors reviewed the content and take full responsibility for it.

\bibliographystyle{IEEEtran}
\bibliography{refs/references}

\end{document}